\documentclass[11pt]{article}

\usepackage{amsmath,amssymb,amsthm,mathtools, braket}
\usepackage{xcolor}
\usepackage[margin=1in]{geometry}
\usepackage[colorlinks=true,linkcolor=blue,citecolor=blue,urlcolor=blue]{hyperref}

\newcommand{\F}{\mathbb F}

\newcommand{\poly}{\mathrm{poly}}

\newcommand{\ba}{\boldsymbol{a}}
\newcommand{\bb}{\boldsymbol{b}}

\newcommand{\boldf}{\boldsymbol{f}}

\newcommand{\bi}{\boldsymbol{i}}

\newcommand{\bell}{\boldsymbol{\ell}}

\newcommand{\bu}{\boldsymbol{u}}
\newcommand{\bv}{\boldsymbol{v}}

\newtheorem{theorem}{Theorem}
\newtheorem{lemma}{Lemma}
\newtheorem{corollary}{Corollary}
\newtheorem{proposition}{Proposition}

\theoremstyle{definition}
\newtheorem{definition}{Definition}

\title{The Low-Individual-Degree Test Without the Diagonal-Lines Test Is Not Quantum-Sound}
\author{Tianrun Zhao\\Stony Brook University}

\date{\today}

\begin{document}

\maketitle

\begin{abstract}
To prove the quantum soundness of the classical low-individual-degree test, the authors of \cite{JNVWY20LID} defined three subtests, namely the axis-parallel lines test, the self-consistency test, and the diagonal-lines test. An interesting question is whether the diagonal-lines test can be removed. In this paper, we show that the diagonal-lines test cannot simply be removed without another compatibility mechanism. Consequently, replacing the "conditional linear functions" by "coordinate deletion functions" in the proof of MIP*=RE, as mentioned in \cite{JNVWY20LID}, does not by itself preserve the required soundness.

The authors of \cite{JNVWY20LID} found an example that requires the diagonal-lines test when \((m, d, q) = (2, 2, 4)\); we give an example when \((m, d) = (2, 2)\) and \(q\) is any odd prime.
\end{abstract}

\section*{AI Disclosure}

The initial main idea was suggested by ChatGPT 5.5 Pro.  The authors are
responsible for the mathematical statements and proofs. This paper is manually written by the authors.

\section{Introduction}
\label{sec:introduction}

Low-degree tests are a basic tool for checking that local answers arise from
a single global polynomial.  Linearity self-testing was introduced by Blum,
Luby, and Rubinfeld~\cite{BLR93}.  Multilinearity and low-individual-degree
testing were introduced and proved sound by Babai, Fortnow, and Lund in their
work on \(\mathsf{MIP}=\mathsf{NEXP}\)~\cite{BFL91}.  Classical analyses of
these tests and closely related low-degree tests were developed further in
\cite{FHS94,PS94,RubSud96,RS97,AroraSudan03}, and became central ingredients
in the PCP theorem~\cite{AS98,ALMSS98}.  Axis-parallel tests have also been
studied in the broader setting of tensor-product codes~\cite{CMS20}.

Quantum soundness is substantially more delicate: answers associated with
different questions are produced by quantum measurements, and those
measurements need not be jointly measurable.  The study of quantum-sound
multilinearity tests began with Ito and Vidick~\cite{IV12}.  Subsequent
low-degree analyses and applications appeared in
\cite{Vid16,NV18a,NV18b,NW19}; the soundness analyses in
\cite{Vid16,NV18b} were later withdrawn because of an inherited proof error.
Ji, Natarajan, Vidick, Wright, and Yuen identified that error and established
quantum soundness of the classical low-individual-degree test by a different
analysis~\cite{JNVWY20LID}.  This corrected theorem supplies an important
ingredient for \(\mathsf{MIP}^*=\mathsf{RE}\)~\cite{JNVWY20MIP}.

The low-individual-degree test contains three types of checks.  The
axis-parallel lines test compares a value reported at a point with a
degree-\(d\) polynomial reported on a coordinate-parallel line through that
point.  The self-consistency test compares the two provers' answers at the
same point.  Finally, the diagonal-lines test compares points lying on more
general lines.  In the quantum analysis, this last component supplies a
compatibility guarantee between point measurements associated with arbitrary
pairs of points.  Such compatibility is automatic for a classical global
function, but not for quantum measurements.

This raises a natural question: can the diagonal-lines test be removed?  In
other words, does perfect, or nearly perfect, consistency on all
axis-parallel lines already force the point measurements to approximately
commute globally?  A counterexample for
\((m,d,q)=(2,2,4)\) was mentioned in~\cite{JNVWY20LID}.  However, that example
does not address the large-field regime relevant to low-degree soundness,
where \(d/q\) tends to zero.  The purpose of this paper is to show that the
same obstruction persists in that regime.

The row-and-column compatibility pattern has a well-known precedent in the
Mermin--Peres magic square and its nonlocal-game formulation
\cite{Mermin90,Peres90,Aravind02,CHTW04}: observables belonging to a common
row or column are compatible, while observables in different rows and columns
need not commute.  Ji, Natarajan, Vidick, Wright, and Yuen explicitly used
this comparison when formulating the large-field question answered here
\cite{JNVWY20LID}.  Their \(q=4\) construction and the magic-square examples
do not themselves yield the regime \(d/q\to0\).

For every odd prime \(q\), we construct a two-prover projective strategy for
the \((m,d)=(2,2)\) axis-parallel test over \(\mathbb F_q^2\).  The strategy
passes both orientations of every axis-parallel line-point check, as well as
the same-point consistency check, with probability one.  Nevertheless, for
independent uniform points \(\bu,\bv\in\mathbb F_q^2\), its point
measurements satisfy the exact identity
\[
\mathbb E_{\bu,\bv}\sum_{a,b\in\mathbb F_q}
\bigl\|([A_a^{\bu},A_b^{\bv}]\otimes I)\ket\Phi\bigr\|^2
=2\left(1-\frac1q\right)^3.
\]
Thus the average state-dependent commutator error approaches \(2\), rather
than zero, even though the retained tests have zero error and
\(d/q=2/q\to0\).

We also show that this phenomenon is not merely a failure of one particular
commutator argument.  Let \(G=\{G_g\}_g\) be any projective measurement whose
outcomes are functions \(g:\mathbb F_q^2\to\mathbb F_q\).  The point
measurements of our strategy have consistency with \(G\) at most
\[
1-\frac1{16}\left(1-\frac1q\right)^3.
\]
Since every odd prime satisfies \(q\geq3\), this is at most \(1-1/54\),
uniformly in \(q\).  In particular, the strategy remains a constant distance
from every global low-individual-degree polynomial measurement.

The construction uses standard harmonic-analysis and quantum-information
machinery.  The Fourier projections used to build the point and line
measurements come from harmonic analysis on finite abelian
groups~\cite{Rudin62}.  The use of a maximally entangled state together with
transposed measurements is standard in quantum information~\cite{Watrous18}.
Finite unitary operator bases and the finite-field Weyl--Heisenberg system go
back at least to Schwinger~\cite{Schwinger60}; their symplectic phase-space
formulation is standard in finite-dimensional quantum
mechanics~\cite{Wootters87,Gross06}.  The problem-specific ingredient of our
construction is the following explicit label.  To each point
\((x,y)\in\mathbb F_q^2\), we associate
\[
w(x,y)=(1,y,y^2;\,xy^2,-2xy,x),
\]
which obeys
\[
\langle w(x,y),w(x',y')\rangle
=(x'-x)(y'-y)^2.
\]
Consequently, the corresponding point measurements commute along every row
and column but fail to commute whenever both coordinates differ.  On each
axis-parallel line, the labels span an isotropic subspace.  Its joint Fourier
measurement produces legal degree-at-most-two line answers and exactly
refines the point measurements on that line.  Weyl trace orthogonality then
yields the exact commutator value above.

Our conclusion is that axis-parallel consistency alone cannot enforce the
global compatibility required by the usual quantum soundness conclusion.
The diagonal-lines test is one way to enforce the missing compatibility;
our result does not rule out replacing it with another explicit
compatibility test.

\paragraph{Organization of the paper.}
Section~\ref{sec:low-individual-degree-test} recalls the retained tests, and
Section~\ref{sec:main-result} states the main result.
Sections~\ref{sec:construction} and~\ref{sec:perfect-acceptance} construct the
strategy and prove perfect acceptance. Section~\ref{sec:global-obstruction}
establishes the commutator and global-measurement obstructions, and
Section~\ref{sec:discussion} discusses extensions.

\section{Low-individual-degree test}
\label{sec:low-individual-degree-test}
We recall the two components of the low-individual-degree test used in this
paper.  The diagonal-lines component is omitted throughout.

\begin{definition}[Axis-parallel lines test]
Choose a player \(w\) uniformly from
\(\{\mathrm{Alice},\mathrm{Bob}\}\), and write \(\overline w\) for the
other player.  Independently sample a point \(\bu\) uniformly from
\(\F_q^m\) and a coordinate \(\bi\) uniformly from \(\{1,\ldots,m\}\).
Let
\[
\bell:=\{\bu+t e_{\bi}:t\in\F_q\}
\]
be the axis-parallel line through \(\bu\) in direction \(e_{\bi}\).  The
verifier sends \(\bell\) to player \(w\), who returns a univariate polynomial
\(\boldf:\bell\to\F_q\) of degree at most \(d\), and sends \(\bu\) to
player \(\overline w\), who returns \(\ba\in\F_q\).  The verifier accepts
exactly when
\[
\boldf(\bu)=\ba.
\]
\end{definition}

\begin{definition}[Self-consistency test]
Sample \(\bu\) uniformly from \(\F_q^m\) and send it to both players.
Alice returns \(\ba\in\F_q\), while Bob returns \(\bb\in\F_q\).  The
verifier accepts exactly when \(\ba=\bb\).
\end{definition}

\begin{definition}[Reduced low-individual-degree test]
The reduced test chooses the axis-parallel lines test or the
self-consistency test, each with probability \(1/2\).  The diagonal-lines test
is omitted.
\end{definition}

\begin{definition}[Projective strategy]
A projective strategy consists of finite-dimensional Hilbert spaces
\(\mathcal H_{\mathrm{Alice}}\) and \(\mathcal H_{\mathrm{Bob}}\), a shared
bipartite state
\[
\ket\psi\in
\mathcal H_{\mathrm{Alice}}\otimes\mathcal H_{\mathrm{Bob}},
\]
and measurement families
\[
A^{\mathrm{Alice}},\quad B^{\mathrm{Alice}},\quad
A^{\mathrm{Bob}},\quad B^{\mathrm{Bob}}.
\]
For each player \(w\in\{\mathrm{Alice},\mathrm{Bob}\}\), these families
have the following form:
\begin{itemize}
\item For every \(u\in\F_q^m\),
\(A^{w,u}=\{A^{w,u}_a\}_{a\in\F_q}\) is a projective measurement on
\(\mathcal H_w\).
\item For every axis-parallel line \(\ell\subseteq\F_q^m\),
\(B^{w,\ell}=\{B^{w,\ell}_f\}_f\) is a projective measurement on
\(\mathcal H_w\), with outcomes indexed by the univariate polynomials
\(f:\ell\to\F_q\) of degree at most \(d\).
\end{itemize}
\end{definition}

\section{Main Result}
\label{sec:main-result}

For background, we refer the reader to \cite{JNVWY20LID}.  In the notation of
that reference, its quantum soundness theorem informally states that, for any
projective strategy
\[
(\ket{\psi}, A^{\mathrm{Alice}}, B^{\mathrm{Alice}}, L^{\mathrm{Alice}},
A^{\mathrm{Bob}}, B^{\mathrm{Bob}}, L^{\mathrm{Bob}})
\]
passing the full test with probability at least \(1-\epsilon\), there exists a
projective measurement \(G=\{G_g\}\) such that
\[
\mathbb{E}_{\bu \sim \F_q^m}\sum_{a \in \mathbb{F}_q}
\sum_{g:g(\bu)=a}
\bra{\psi} A^{\bu}_{a} \otimes G_g \ket{\psi}
\geq 1-\poly(m)\bigl(\poly(\epsilon)+\poly(d/q)\bigr).
\]
For fixed \(m\) and \(d\), the right-hand side tends to \(1\) as
\(\epsilon\to0\) and \(q\to\infty\).

The question addressed here is whether the diagonal-lines test is necessary.
We show that the reduced test does not have the corresponding quantum
soundness property.

\begin{theorem}[Failure without the diagonal-lines test]
For every odd prime \(q\), there is a two-prover projective strategy for the
reduced low-individual-degree test with parameters \((m,d)=(2,2)\) that is
accepted with probability \(1\).  Writing \(\ket{\psi}\) for its shared state
and \(A^u\) for Alice's point measurement, every projective measurement
\(G=\{G_g\}_g\) on Bob's Hilbert space whose outcomes are functions
\(g:\F_q^2\to\F_q\) satisfies
\[
\mathbb{E}_{\bu\sim\F_q^2}
\sum_{a\in\F_q}\sum_{g:g(\bu)=a}
\bra{\psi}A_a^{\bu}\otimes G_g\ket{\psi}
\leq
1-\frac1{16}\left(1-\frac1q\right)^3
\leq 1-\frac1{54}.
\]
\end{theorem}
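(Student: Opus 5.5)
The construction is a Weyl--Heisenberg strategy on \(\mathbb C^{q^3}\), the space of functions \(\F_q^3\to\mathbb C\); phase space is \(\F_q^6\) with the symplectic form \(\langle\cdot,\cdot\rangle\). To each point we attach the label \(w(x,y)=(1,y,y^2;\,xy^2,-2xy,x)\) from the introduction, and we first verify \(\langle w(x,y),w(x',y')\rangle=(x'-x)(y'-y)^2\). Write \(W(v)\) for the generalized Pauli operator associated with \(v\), and let \(\omega=e^{2\pi i/q}\). The point observable at \(u\) is \(W(w(u))\), suitably phased so that its order is \(q\). Its spectral projections define \(A^u_a\) through
\[
A^u_a=\frac1q\sum_{k}\omega^{-ka}W(kw(u)).
\]
Bob uses the transposes (complex conjugates) of Alice's measurements. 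The shared state is the maximally entangled state \(\ket\Phi\), and the roles are symmetric. With this choice, \(\bra\Phi X\otimes \overline{Y}\ket\Phi=\frac1{q^3}\mathrm{tr}(XY^\dagger)\)-type identities turn every success probability into a normalized trace.

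\textbf{Perfect acceptance.} Self-consistency follows immediately, because \(\bra\Phi P\otimes\overline{Q}\ket\Phi=\mathrm{tr}(PQ)/q^3\) and the \(A^u_a\) are orthogonal projections summing to \(I\). On a horizontal line (fixed \(y\)), \(w(x,y)=w(0,y)+x\,(0,0,0;y^2,-2y,1)\) is affine in \(x\). On a vertical line (fixed \(x\)), it is a degree-\(2\) polynomial in \(y\) with coefficient vectors \((1,0,0;0,0,x)\), \((0,1,0;0,-2x,0)\) and \((0,0,1;x,0,0)\). In both cases the pairing formula vanishes along the line, so the span is isotropic. The commuting Weyl operators on that span can therefore be jointly diagonalized, once phases are chosen so that \(W\) is a representation on the subspace; this is possible because \(q\) is odd. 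The joint outcome is a linear functional \(\lambda\) on the span, and we report \(f(t)=\lambda(w(\text{point}_t))\), a polynomial of degree at most \(2\) in \(t\). By construction, this joint measurement refines each \(A^u\) on the line. The line–point check then passes with probability \(1\) by the same trace identity, in either orientation.

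\textbf{Global bound.} This is the main obstacle. The plan is to reduce consistency with \(G\) to a commutator or noncommutation bound. Let \(\gamma=\mathbb E_u\sum_a\bra\psi A^u_a\otimes G^u_a\ket\psi\), where \(G^u_a=\sum_{g(u)=a}G_g\) are commuting projective measurements on Bob's side coming from one measurement. Suppose \(\gamma\geq1-\delta\). Using self-consistency, which is exact here, and the standard data-processing estimates for projective measurements, we get \(A^u_a\otimes I\ket\psi\approx I\otimes G^u_a\ket\psi\) with squared error \(O(\delta)\) on average. Since all \(G^u\) commute, this gives \(\mathbb E_{u,v}\sum_{a,b}\|[A^u_a,A^v_b]\otimes I\ket\psi\|^2\le c\,\delta\) via a triangle-inequality chain of four swaps. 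On the other hand, Weyl trace orthogonality computes the left side exactly. Noncommuting pairs, with \(x\ne x'\) and \(y\ne y'\), occur with probability \((1-1/q)^2\). For such a pair, \(\sum_{a,b}\|[A^u_a,A^v_b]\|_\Phi^2=2(1-\sum_{a,b}\mathrm{tr}(A_aB_bA_aB_b)/q^3)=2(1-1/q)\), because the two generate a Heisenberg pair. Hence the commutator value is \(2(1-1/q)^3\). Tracking the constant in the swap chain carefully should give \(c=32\), which yields \(\delta\ge\frac1{16}(1-1/q)^3\). If the chain yields a worse constant, the fallback is to bound \(\gamma\) directly by a trace calculation: \(\gamma\) is at most the average over pairs of the best joint agreement achievable by a commuting family on two non-commuting Heisenberg observables. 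Finally, \(q\ge3\) gives \((1-1/q)^3\ge8/27\), so the bound is at most \(1-\frac{8}{27\cdot16}=1-\frac1{54}\).
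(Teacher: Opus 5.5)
Your proposal is correct and follows essentially the paper's route. It uses the same Weyl--Heisenberg construction with label $w$, line measurements given by the joint Fourier eigenprojections of the isotropic span $S_\ell$ refining each $A^u$, the exact average commutator $2(1-1/q)^3$, and a four-term triangle-inequality chain that uses commutativity of the $G^u$. Your per-pair value $2(1-1/q)$, obtained from $2\bigl(1-q^{-3}\sum_{a,b}\mathrm{tr}(A^u_aA^v_bA^u_aA^v_b)\bigr)$ and the Heisenberg twirl, is just an equivalent way of doing the paper's Weyl-basis expansion.

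Your hedge on the constant is unnecessary, as long as you drop the appeal to self-consistency and lossy data-processing estimates, which would spoil the constant. With $\gamma_u:=\sum_a\bra\psi A^u_a\otimes G^u_a\ket\psi$, one has exactly $D_u:=\sum_a\|(A^u_a\otimes I-I\otimes G^u_a)\ket\psi\|^2=2(1-\gamma_u)$. Each of the four error families has total squared norm $D_u$ or $D_v$, so $\Gamma(u,v)\le 4(\sqrt{D_u}+\sqrt{D_v})^2\le 8(D_u+D_v)$, and averaging gives precisely $c=32$.
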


We construct the point measurements
\((A^{\mathrm{Alice}},A^{\mathrm{Bob}})\), the line measurements
\((B^{\mathrm{Alice}},B^{\mathrm{Bob}})\), and the shared state
\(\ket{\psi}\) in the following section.

\section{Construction}
\label{sec:construction}

\subsection{Hilbert-space construction}
Let \(q\) be an odd prime number, fix \(\omega = e^{2\pi i/q}\). We choose the Hilbert space as 
\[\mathcal{H} = \mathbb{C}^{\mathbb{F}_q^3} := \operatorname{span}_{\mathbb{C}} \{\ket{z}: z \in \mathbb{F}_q^3\},\]
where \(\{\ket z:z\in\mathbb F_q^3\}\) is the standard orthonormal basis. Obviously, \(\dim \mathcal{H} = q^3\).

\subsection{Quantum-state choice}
We choose the quantum state \(\ket{\psi}\) to be the maximally entangled state,
which we also denote by \(\ket\Phi\):
\[\ket{\psi}=\ket\Phi := \frac{1}{\sqrt{\dim \mathcal{H}}} \sum_{z \in \mathbb{F}_q^3}\ket{z} \otimes \ket{z}.\]

\subsection{Point-measurement construction}
For any \(u \in \mathbb{F}_q^2\) and \(a\in \mathbb{F}_q\), we construct the point measurement as:
\[
A_a^u = \frac{1}{q} \sum_{t \in \mathbb{F}_q} \omega^{-ta}W(tw(u)).
\]

We explain the meaning of \(w\) and \(W\) in the above definition.

\begin{itemize}
\item Definition of \(w\):

For $(x, y) \in \mathbb{F}_q^2$, define

$$
\boxed{
w(x, y):=\left(1, y, y^2 ; x y^2,-2 x y, x\right) \in \mathbb{F}_q^3 \oplus \mathbb{F}_q^3}.
$$
The motivation for this choice will be explained after
Lemma~\ref{lem:point-label-symplectic-product}, where we compute its crucial
symplectic property.
\item Definition of \(W\):

For \(p\in \mathbb{F}_q^3\), define the generalized Pauli \emph{phase} operator
\[
Z(p)\ket z=\omega^{p\cdot z}\ket z,
\]

For \(r\in \mathbb{F}_q^3\), define the generalized Pauli \emph{shift} operator
\[
X(r)\ket z=\ket{z+r}.
\]

For \((p,r)\in \mathbb{F}_q^3\oplus\mathbb{F}_q^3\), define the associated Weyl operator
\[
W(p,r):=\omega^{-\frac12p\cdot r}Z(p)X(r).
\]
\end{itemize}

For Alice, we use \(A_a^{\mathrm{Alice},u} = A_a^u\), while for Bob, we use the transpose \(A_a^{\mathrm{Bob},u} = (A_a^u)^\mathrm{T}\).

Now we need to show that \(A^u = \{A_a^u\}_{a\in \mathbb{F}_q}\) is indeed a projective measurement. To do so, we need to first establish some special properties of the Weyl operator \(W\).

For \(v=(p,r)\) and \(v'=(p',r')\), define the standard symplectic form by
\[
\langle v,v'\rangle:=p\cdot r'-r\cdot p'.
\]

\begin{lemma}[Phase--shift commutation]
\label{lem:phase-shift-commutation}
For every \(p,r\in\mathbb F_q^3\),
\[
X(r)Z(p)=\omega^{-p\cdot r}Z(p)X(r).
\]
\end{lemma}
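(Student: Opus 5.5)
The plan is to verify the identity by applying both sides to an arbitrary standard basis vector \(\ket z\), \(z\in\F_q^3\), and comparing. Since \(\{\ket z\}\) is a basis and both sides are linear, agreement on every \(\ket z\) proves the operator identity.

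For the left-hand side, first apply \(Z(p)\) and then the shift:
\[
X(r)Z(p)\ket z=\omega^{p\cdot z}X(r)\ket z=\omega^{p\cdot z}\ket{z+r}.
\]
For the right-hand side, first shift and then apply the phase, which is now evaluated at the shifted point:
\[
\omega^{-p\cdot r}Z(p)X(r)\ket z=\omega^{-p\cdot r}Z(p)\ket{z+r}
=\omega^{-p\cdot r}\omega^{p\cdot(z+r)}\ket{z+r}.
\]
Bilinearity of the dot product over \(\F_q\) gives \(p\cdot(z+r)=p\cdot z+p\cdot r\). The exponent on the right-hand side therefore equals \(p\cdot z\), and the two sides agree on \(\ket z\).

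The only point that needs care is that the exponents are elements of \(\F_q\) while \(\omega\) is a complex number. The expression \(\omega^{c}\) for \(c\in\F_q\) is well defined because \(q\) is prime, so \(\F_q=\mathbb Z/q\mathbb Z\) and \(\omega^q=1\). For the same reason, \(c\mapsto\omega^c\) is a homomorphism from \((\F_q,+)\) to \(\mathbb C^\times\). This homomorphism property is exactly what justifies combining \(\omega^{-p\cdot r}\omega^{p\cdot(z+r)}\) into \(\omega^{p\cdot z}\).

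Beyond this remark there is no real obstacle; the argument is a direct computation. I would state the well-definedness of \(\omega^c\) once here, since the later Weyl-operator lemmas rely on the same fact. One exception is the factor \(\omega^{-\frac12 p\cdot r}\) in the definition of \(W\). There \(\tfrac12\) means the inverse of \(2\) in \(\F_q\), which exists because \(q\) is odd.
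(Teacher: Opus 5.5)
Your proposal is correct and takes essentially the same route as the paper: you evaluate both sides on each computational basis vector \(\ket z\) and use \(p\cdot(z+r)=p\cdot z+p\cdot r\). Your added remark, that \(c\mapsto\omega^c\) is a well-defined homomorphism on \(\F_q=\mathbb Z/q\mathbb Z\), is a sound clarification that the paper leaves implicit.
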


\begin{proof}
For every computational-basis vector \(\ket z\),
\begin{align*}
X(r)Z(p)\ket z
&=\omega^{p\cdot z}\ket{z+r},\\
Z(p)X(r)\ket z
&=\omega^{p\cdot(z+r)}\ket{z+r}
=\omega^{p\cdot r}\omega^{p\cdot z}\ket{z+r}.
\end{align*}
Therefore
\[
X(r)Z(p)\ket z
=\omega^{-p\cdot r}Z(p)X(r)\ket z.
\]
Since this holds for every basis vector \(\ket z\), the claimed operator
identity follows.
\end{proof}

\begin{lemma}[Weyl multiplication law]
\label{lem:weyl-multiplication-law}
For all \(v,v'\in\mathbb F_q^3\oplus\mathbb F_q^3\),
\[
W(v)W(v')
=\omega^{\frac12\langle v,v'\rangle}W(v+v').
\]
\end{lemma}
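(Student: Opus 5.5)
The plan is a direct computation on computational basis vectors, reducing everything to Lemma~\ref{lem:phase-shift-commutation}. Write \(v=(p,r)\) and \(v'=(p',r')\). Expanding the definition of \(W\) gives
\[
W(v)W(v')=\omega^{-\frac12 p\cdot r-\frac12 p'\cdot r'}\,Z(p)X(r)Z(p')X(r').
\]
The first step is to move \(Z(p')\) to the left past \(X(r)\). By Lemma~\ref{lem:phase-shift-commutation}, \(X(r)Z(p')=\omega^{-p'\cdot r}Z(p')X(r)\). The phase operators commute among themselves, as do the shift operators, and they satisfy \(Z(p)Z(p')=Z(p+p')\) and \(X(r)X(r')=X(r+r')\); both identities are immediate on basis vectors. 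Hence
\[
W(v)W(v')=\omega^{-\frac12 p\cdot r-\frac12 p'\cdot r'-p'\cdot r}\,Z(p+p')X(r+r').
\]

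The second step is to rewrite the right-hand side in terms of \(W(v+v')\). By definition,
\[
W(v+v')=\omega^{-\frac12(p+p')\cdot(r+r')}Z(p+p')X(r+r').
\]
So it suffices to compare exponents: we need
\[
-\tfrac12 p\cdot r-\tfrac12 p'\cdot r'-p'\cdot r
=-\tfrac12(p+p')\cdot(r+r')+\tfrac12\langle v,v'\rangle .
\]
Expanding the right-hand side gives \(-\tfrac12(p\cdot r+p\cdot r'+p'\cdot r+p'\cdot r')+\tfrac12(p\cdot r'-r\cdot p')\). This equals \(-\tfrac12 p\cdot r-\tfrac12 p'\cdot r'-p'\cdot r\), as required.

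The only point needing care is the meaning of \(\tfrac12\) in the exponent of \(\omega\). Since \(q\) is odd, \(2\) is invertible in \(\mathbb F_q\), and \(\omega^{\frac12 c}\) should be read as \(\omega^{2^{-1}c}\) with \(2^{-1}\in\mathbb F_q\). This makes every exponent a well-defined element of \(\mathbb F_q\), because \(\omega^q=1\). The exponent identity above then holds in \(\mathbb F_q\): every step is linear algebra over the field, and halving is multiplication by \(2^{-1}\). This bookkeeping, rather than any conceptual difficulty, is the main thing to check.
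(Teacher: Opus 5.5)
Your proposal is correct and follows the paper's proof essentially verbatim. Like the paper, you commute \(Z(p')\) past \(X(r)\) via Lemma~\ref{lem:phase-shift-commutation}, merge the phase and shift operators, and compare the resulting phase with that of \(W(v+v')\) to get the factor \(\omega^{\frac12\langle v,v'\rangle}\); your explicit reading of \(\frac12\) as \(2^{-1}\in\mathbb F_q\) is a useful clarification that the paper leaves implicit.
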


\begin{proof}
Write \(v=(p,r)\) and \(v'=(p',r')\).  Using
Lemma~\ref{lem:phase-shift-commutation}, together with
\(Z(p)Z(p')=Z(p+p')\) and \(X(r)X(r')=X(r+r')\), we obtain
\begin{align*}
W(p,r)W(p',r')
&=\omega^{-\frac12p\cdot r-\frac12p'\cdot r'}
  Z(p)X(r)Z(p')X(r')\\
&=\omega^{-\frac12p\cdot r-\frac12p'\cdot r'-p'\cdot r}
  Z(p+p')X(r+r').
\end{align*}
On the other hand,
\[
W(p+p',r+r')
=\omega^{-\frac12(p+p')\cdot(r+r')}Z(p+p')X(r+r').
\]
The ratio between the two phases is
\[
\omega^{\frac12(p\cdot r'-r\cdot p')}
=\omega^{\frac12\langle v,v'\rangle},
\]
which proves the identity.
\end{proof}

\begin{lemma}[Adjoint of a Weyl operator]
\label{lem:weyl-adjoint}
For every \(v\in\mathbb F_q^3\oplus\mathbb F_q^3\),
\[
W(v)^\dagger=W(-v).
\]
\end{lemma}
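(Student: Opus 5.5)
We prove the adjoint identity \(W(v)^\dagger=W(-v)\) directly from the definitions of \(Z\), \(X\) and the phase convention in \(W\). Write \(v=(p,r)\). The basic facts needed are that \(Z(p)\) is diagonal with entries \(\omega^{p\cdot z}\), so \(Z(p)^\dagger=Z(-p)\), and that \(X(r)\) is a permutation matrix, so \(X(r)^\dagger=X(r)^{-1}=X(-r)\). Both follow immediately by checking the action on computational basis vectors. Also, \(\omega^{-\frac12 p\cdot r}\) is a complex number of modulus one; here \(\tfrac12\) is the inverse of \(2\) in \(\mathbb F_q\), which makes sense since \(q\) is odd. Its conjugate is therefore \(\omega^{\frac12 p\cdot r}\).

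Taking adjoints reverses the order of the product:
\[
W(p,r)^\dagger=\omega^{\frac12 p\cdot r}X(r)^\dagger Z(p)^\dagger=\omega^{\frac12 p\cdot r}X(-r)Z(-p).
\]
Next we move the shift past the phase using Lemma~\ref{lem:phase-shift-commutation}, applied with \(-p\) and \(-r\):
\[
X(-r)Z(-p)=\omega^{-(-p)\cdot(-r)}Z(-p)X(-r)=\omega^{-p\cdot r}Z(-p)X(-r).
\]
Combining the two displays gives
\[
W(p,r)^\dagger=\omega^{-\frac12 p\cdot r}Z(-p)X(-r).
\]
By definition, \(W(-p,-r)=\omega^{-\frac12(-p)\cdot(-r)}Z(-p)X(-r)=\omega^{-\frac12 p\cdot r}Z(-p)X(-r)\), which agrees with the expression above.

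The only delicate point is the phase bookkeeping. The conjugation produces a factor \(\omega^{+\frac12 p\cdot r}\), and the commutation produces \(\omega^{-p\cdot r}\); together they yield the required \(\omega^{-\frac12 p\cdot r}\). For this we need the convention that \(\omega^{c}\) depends only on \(c\in\mathbb F_q\), with \(\tfrac12\) interpreted in \(\mathbb F_q\), so that exponent arithmetic is consistent modulo \(q\).

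As a cross-check, Lemma~\ref{lem:weyl-multiplication-law} gives \(W(v)W(-v)=\omega^{\frac12\langle v,-v\rangle}W(0)=I\), since the symplectic form is alternating. The same lemma gives \(W(-v)W(v)=I\), so \(W(-v)=W(v)^{-1}\). Once the direct computation above shows \(W(v)\) is unitary, this is consistent with \(W(v)^\dagger=W(v)^{-1}=W(-v)\).
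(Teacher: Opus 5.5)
Your proof is correct and matches the paper's argument step for step: you reverse the order under the adjoint, use \(Z(p)^\dagger=Z(-p)\) and \(X(r)^\dagger=X(-r)\), apply the phase--shift commutation lemma with \((-p,-r)\), and combine \(\omega^{\frac12 p\cdot r}\omega^{-p\cdot r}=\omega^{-\frac12 p\cdot r}\). The closing cross-check via the multiplication law is not needed for the proof, but it does no harm.
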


\begin{proof}
Let \(v=(p,r)\).  Since \(Z(p)^\dagger=Z(-p)\) and
\(X(r)^\dagger=X(-r)\), Lemma~\ref{lem:phase-shift-commutation}, applied
with \((p,r)\) replaced by \((-p,-r)\), gives
\begin{align*}
W(p,r)^\dagger
&=\omega^{\frac12p\cdot r}X(-r)Z(-p)\\
&=\omega^{\frac12p\cdot r}\omega^{-p\cdot r}
  Z(-p)X(-r)\\
&=\omega^{-\frac12p\cdot r}Z(-p)X(-r)\\
&=W(-p,-r).
\end{align*}
\end{proof}

\begin{lemma}[Multiplication along a point label]
\label{lem:point-label-weyl-multiplication}
For every point \(u\in\mathbb F_q^2\) and all \(s,t\in\mathbb F_q\),
\[
W(sw(u))W(tw(u))=W((s+t)w(u)).
\]
\end{lemma}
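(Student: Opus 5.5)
We apply Lemma~\ref{lem:weyl-multiplication-law} with \(v=sw(u)\) and \(v'=tw(u)\). This gives
\[
W(sw(u))W(tw(u))=\omega^{\frac12\langle sw(u),tw(u)\rangle}W((s+t)w(u)),
\]
so it suffices to show that the phase is trivial, that is, \(\langle sw(u),tw(u)\rangle=0\) in \(\mathbb F_q\).

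The symplectic form is bilinear, so \(\langle sw(u),tw(u)\rangle=st\,\langle w(u),w(u)\rangle\). It is also alternating: for \(v=(p,r)\) we have \(\langle v,v\rangle=p\cdot r-r\cdot p=0\), since the dot product on \(\mathbb F_q^3\) is symmetric. Hence the exponent vanishes.

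One point needs care: the meaning of \(\tfrac12\) in the exponent. Because \(q\) is odd, \(\tfrac12\) is read as the inverse of \(2\) in \(\mathbb F_q\), and \(\omega^{c}\) depends only on \(c\in\mathbb F_q\). This is the same convention used in the definition of \(W\), so \(\omega^{\frac12\cdot 0}=1\) and no ambiguity arises.

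There is no real obstacle here. The only substantive input is that any vector is isotropic for a symplectic form. This requires no computation with the explicit label \(w(x,y)\); the special shape of \(w\) is needed only later, for pairs of distinct points.
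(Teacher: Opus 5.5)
Your proof is correct and is essentially the paper's argument: both apply Lemma~\ref{lem:weyl-multiplication-law} and show the phase is trivial via bilinearity and the alternating identity \(\langle v,v\rangle=p\cdot r-r\cdot p=0\). Your remark that \(\tfrac12\) means \(2^{-1}\in\mathbb F_q\) is a sensible clarification that the paper leaves implicit.
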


\begin{proof}
The symplectic form is alternating, since for every \(v=(p,r)\),
\[
\langle v,v\rangle=p\cdot r-r\cdot p=0.
\]
By bilinearity,
\[
\langle sw(u),tw(u)\rangle
=st\langle w(u),w(u)\rangle=0.
\]
Lemma~\ref{lem:weyl-multiplication-law} therefore gives
\begin{align*}
W(sw(u))W(tw(u))
&=\omega^{\frac12\langle sw(u),tw(u)\rangle}
  W((s+t)w(u))\\
&=W((s+t)w(u)).
\end{align*}
\end{proof}

\begin{lemma}[Character orthogonality on \(\mathbb F_q\)]
\label{lem:fq-character-orthogonality}
For every \(c\in\mathbb F_q\),
\[
\sum_{t\in\mathbb F_q}\omega^{tc}
=
\begin{cases}
q,&c=0,\\
0,&c\neq0.
\end{cases}
\]
\end{lemma}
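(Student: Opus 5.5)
The plan is to split according to whether $c$ vanishes. If $c=0$, every summand equals $\omega^0=1$, and the sum of $q$ ones is $q$.

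If $c\neq0$, set $S=\sum_{t\in\mathbb F_q}\omega^{tc}$. We use the fact that $\mathbb F_q$ is a field, so $c$ is invertible. Hence the map $t\mapsto tc$ is a bijection of $\mathbb F_q$, and
\[
S=\sum_{s\in\mathbb F_q}\omega^{s}.
\]
Since $q$ is prime, $\mathbb F_q=\mathbb Z/q\mathbb Z$. Exponentiation $\omega^s$ with $\omega=e^{2\pi i/q}$ is therefore well defined on residues, and $S=\sum_{s=0}^{q-1}\omega^s$.

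This is a finite geometric series. Because $\omega\neq1$ is a $q$-th root of unity, we get
\[
S=\frac{\omega^q-1}{\omega-1}=0.
\]

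An alternative that avoids the explicit formula is the shift argument. Choose $t_0$ with $t_0c=1$ and substitute $t\mapsto t+t_0$. This gives $S=\omega S$, and since $\omega\neq1$ it follows that $S=0$.

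The only point requiring care is that the exponent $tc$ is interpreted in $\mathbb F_q=\mathbb Z/q\mathbb Z$. It is well defined because $\omega^q=1$. There is no real obstacle beyond this.
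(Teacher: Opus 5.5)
Your proof is correct and follows essentially the same route as the paper: split on whether \(c=0\), use that multiplication by a nonzero \(c\) permutes \(\mathbb F_q\), and reduce to the vanishing of the sum of all \(q\)-th roots of unity. The only difference is that you also prove that last fact (by the geometric series or the shift argument), which the paper simply cites.
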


\begin{proof}
If \(c=0\), every summand is \(1\), so the sum is \(q\).  If \(c\neq0\),
multiplication by \(c\) permutes \(\mathbb F_q\), and hence
\[
\sum_{t\in\mathbb F_q}\omega^{tc}
=\sum_{z\in\mathbb F_q}\omega^z=0,
\]
where the last equality is the sum of all \(q\)-th roots of unity.
\end{proof}

\begin{proposition}[Point measurements]
\label{prop:point-measurements}
For every \(u\in\mathbb F_q^2\), the family
\(A^u=\{A_a^u\}_{a\in\mathbb F_q}\) is a projective measurement on
\(\mathcal H\).  Explicitly, for all \(a,b\in\mathbb F_q\),
\[
(A_a^u)^\dagger=A_a^u,\qquad
A_a^uA_b^u=\delta_{a,b}A_a^u,\qquad
\sum_{a\in\mathbb F_q}A_a^u=I.
\]
\end{proposition}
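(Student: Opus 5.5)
The plan is to verify the three identities directly from the definition \(A_a^u=\frac1q\sum_{t}\omega^{-ta}W(tw(u))\). We use only Lemma~\ref{lem:weyl-adjoint}, Lemma~\ref{lem:point-label-weyl-multiplication}, and Lemma~\ref{lem:fq-character-orthogonality}, together with \(W(0)=I\), which is immediate from the definition since \(Z(0)=X(0)=I\).

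\emph{Self-adjointness.} By Lemma~\ref{lem:weyl-adjoint},
\[
(A_a^u)^\dagger=\frac1q\sum_{t}\omega^{ta}W(-tw(u)).
\]
Substituting \(t\mapsto -t\), which permutes \(\mathbb F_q\), gives back \(A_a^u\).

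\emph{Completeness.} We sum over \(a\) first:
\[
\sum_a A_a^u=\frac1q\sum_t\Bigl(\sum_a\omega^{-ta}\Bigr)W(tw(u)).
\]
By Lemma~\ref{lem:fq-character-orthogonality}, the inner sum equals \(q\delta_{t,0}\), so the expression reduces to \(W(0)=I\).

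\emph{Orthogonality and idempotence.} Expanding the product and applying Lemma~\ref{lem:point-label-weyl-multiplication} gives
\[
A_a^uA_b^u=\frac1{q^2}\sum_{s,t}\omega^{-sa-tb}W((s+t)w(u)).
\]
We reparametrize by \(r=s+t\) with \(t\) free, so that \(s=r-t\) and the phase becomes \(\omega^{-ra}\omega^{-t(b-a)}\). The sum over \(t\) is \(q\delta_{a,b}\) by Lemma~\ref{lem:fq-character-orthogonality}, which leaves
\[
\delta_{a,b}\,\frac1q\sum_r\omega^{-ra}W(rw(u))=\delta_{a,b}A_a^u.
\]

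There is no real obstacle, because all the structural input is already contained in Lemma~\ref{lem:point-label-weyl-multiplication}: it says that \(t\mapsto W(tw(u))\) is a genuine representation of \(\mathbb F_q\) with no phase, because the symplectic form is alternating. The only points needing care are the phase bookkeeping in the change of variables and the observation that \(\tfrac12\) in the exponents is interpreted in \(\mathbb F_q\), which is valid since \(q\) is odd. A unitary representation of \(\mathbb F_q\) yields projective spectral measurements via the Fourier transform, and this is exactly the argument above.
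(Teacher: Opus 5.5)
Your proof is correct and follows the paper's argument essentially step for step. Hermiticity comes from Lemma~\ref{lem:weyl-adjoint} with $t\mapsto -t$; orthogonality comes from Lemma~\ref{lem:point-label-weyl-multiplication}, reindexing by $s+t$, and character orthogonality; completeness comes from interchanging the sums. The only cosmetic difference is that you keep $t$ free in the reindexing where the paper keeps $s$ free, which is immaterial.
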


\begin{proof}
\noindent\emph{Hermiticity.}
By Lemma~\ref{lem:weyl-adjoint}, we obtain
\begin{align*}
(A_a^u)^\dagger
&=\frac1q\sum_{t\in\mathbb F_q}
  \omega^{ta}W(-tw(u))\\
&=\frac1q\sum_{s\in\mathbb F_q}
  \omega^{-sa}W(sw(u))
=A_a^u.
\end{align*}
For the second equality, set \(s=-t\).  Negation is a bijection of
\(\mathbb F_q\), and hence summing over \(t\) is equivalent to summing over
\(s\).

\par\medskip\noindent\emph{Orthogonality.}
For \(a,b\in\mathbb F_q\),
Lemma~\ref{lem:point-label-weyl-multiplication} gives
\begin{align*}
A_a^uA_b^u
&=\frac1{q^2}\sum_{s,t\in\mathbb F_q}
  \omega^{-sa-tb}W((s+t)w(u))\\
&=\frac1{q^2}\sum_{k\in\mathbb F_q}
  \omega^{-kb}W(kw(u))
  \sum_{s\in\mathbb F_q}\omega^{s(b-a)}.
\end{align*}
By Lemma~\ref{lem:fq-character-orthogonality}, applied with \(c=b-a\),
\[
\sum_{s\in\mathbb F_q}\omega^{s(b-a)}
=q\,\delta_{a,b}.
\]
Consequently,
\[
A_a^uA_b^u
=\delta_{a,b}\frac1q\sum_{k\in\mathbb F_q}
  \omega^{-ka}W(kw(u))
=\delta_{a,b}A_a^u.
\]

\par\medskip\noindent\emph{Completeness.}
Interchanging the two sums gives
\begin{align*}
\sum_{a\in\mathbb F_q}A_a^u
&=\frac1q\sum_{t\in\mathbb F_q}
  \left(\sum_{a\in\mathbb F_q}\omega^{-ta}\right)W(tw(u))\\
&=W(0)=I,
\end{align*}
where the second equality follows from
Lemma~\ref{lem:fq-character-orthogonality}, applied with \(c=-t\).
This proves all three identities and hence the proposition.
\end{proof}

\subsection{Line-measurement construction}
Fix an axis-parallel line \(\ell\subseteq\mathbb F_q^2\).  First, define
the vector subspace
\[
S_\ell = \operatorname{span}_{\mathbb{F}_q} \{w(u) : u \in \ell\} \subseteq \mathbb{F}_q^3 \oplus \mathbb{F}_q^3.
\]

\begin{itemize}
	\item Definition of \(\lambda\):

Let \(S_\ell^*\) denote the dual space of \(S_\ell\), that is,
\[
S_\ell^*:=\operatorname{Hom}_{\mathbb F_q}(S_\ell,\mathbb F_q),
\]
the space of \(\mathbb F_q\)-linear functionals from \(S_\ell\) to
\(\mathbb F_q\).

\item Definition of \(h_{\lambda}^\ell\):

For each \(\lambda\in S_\ell^*\), define the induced line answer
\[
h_\lambda^\ell \colon \ell \to \mathbb{F}_q,
\qquad h_\lambda^\ell(u) := \lambda(w(u)).
\]
Indeed, 
\begin{itemize}
	\item[$\circ$] On a horizontal line:
	\[\lambda(w(x, y_0))=\lambda\left((1, y_0, y_0^2 ; 0,0,0)\right)+x\cdot\lambda\left((0,0,0 ; y_0^2,-2 y_0, 1)\right),\]
	which is affine in \(x\).
	\item[$\circ$] On a vertical line:
	\[\lambda(w(x_0, y))=\lambda\left((1, 0, 0 ; 0,0, x_0)\right) + y\cdot \lambda\left((0, 1, 0; 0, -2x_0, 0)\right) + y^2 \cdot \lambda\left((0,0,1; x_0,0,0)\right),\]
which is quadratic in \(y\).
\end{itemize}
In summary, we have shown that \(\deg h_\lambda^\ell \le 2\).

\item Definition of \(P_\lambda^\ell\):

Define
\[P_\lambda^{\ell}:=\frac{1}{\left|S_{\ell}\right|} \sum_{v \in S_{\ell}} \omega^{-\lambda(v)} W(v).\]

As we will show later, the family
\(\{P_\lambda^\ell\}_{\lambda\in S_\ell^*}\) is a projective measurement.
\end{itemize}

For every legal line answer \(g\colon\ell\to\mathbb F_q\) of degree at most
\(d=2\), define
\[
B_g^{\ell}:=\sum_{\lambda\in S_\ell^*:\,h_\lambda^\ell=g}
P_\lambda^{\ell}.
\]

For Alice, we use
\(B_g^{\mathrm{Alice},\ell}=B_g^\ell\), while for Bob, we use the transpose
\(B_g^{\mathrm{Bob},\ell}=(B_g^\ell)^\mathrm{T}\).

We next prove that these operators form a projective measurement.  The first
step is Lemma~\ref{lem:line-label-space-isotropic}, which shows that the phase
in Lemma~\ref{lem:weyl-multiplication-law} vanishes on \(S_\ell\).

\begin{lemma}[Symplectic product of point labels]
\label{lem:point-label-symplectic-product}
For all \((x,y),(x',y')\in\mathbb F_q^2\),
\[
\langle w(x,y),w(x',y')\rangle
=(x'-x)(y'-y)^2.
\]
\end{lemma}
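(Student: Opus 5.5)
The proof is a direct computation from the definition of the symplectic form. There are no cases to split; the only task is to organize the terms so that the perfect squares become visible. Write
\[
w(x,y)=(p;r),\qquad p=(1,y,y^2),\quad r=(xy^2,-2xy,x),
\]
and similarly
\[
w(x',y')=(p';r'),\qquad p'=(1,y',y'^2),\quad r'=(x'y'^2,-2x'y',x').
\]
By definition,
\[
\langle w(x,y),w(x',y')\rangle=p\cdot r'-r\cdot p'.
\]
The plan is to compute the two dot products separately and recognize each as a multiple of a square.

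For the first term,
\[
p\cdot r'=x'y'^2-2x'y'y+x'y^2=x'\,(y'-y)^2.
\]
The factor \(x'\) comes out of all three terms, and the remaining quadratic in \(y,y'\) is a perfect square. For the second term,
\[
r\cdot p'=xy^2-2xyy'+xy'^2=x\,(y-y')^2.
\]
Subtracting the second from the first, and using \((y-y')^2=(y'-y)^2\), gives
\[
\langle w(x,y),w(x',y')\rangle=(x'-x)(y'-y)^2,
\]
which is the claimed identity.

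There is no real obstacle here. The only point needing care is matching the sign convention \(\langle v,v'\rangle=p\cdot r'-r\cdot p'\) with the ordering of the six coordinates in \(w\). In particular, the middle coordinate \(-2xy\) of \(r\) pairs with the middle coordinate \(y'\) of \(p'\) to produce the cross term \(-2xyy'\). All arithmetic takes place in \(\mathbb F_q\), and the identity is a polynomial identity valid over any commutative ring, so the oddness of \(q\) plays no role in this lemma.
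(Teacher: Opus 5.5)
Your proposal is correct and matches the paper's proof: both expand \(p\cdot r'-r\cdot p'\) directly and recognize \(x'(y'-y)^2\) and \(x(y-y')^2\) as the two pieces. Your remark that this is a polynomial identity over any commutative ring, so oddness of \(q\) plays no role here, is also accurate.
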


\begin{proof}
Direct expansion gives
\begin{align*}
\langle w(x,y),w(x',y')\rangle
&=(1,y,y^2)\cdot(x'y'^2,-2x'y',x')\\
&\quad -(xy^2,-2xy,x)\cdot(1,y',y'^2)\\
&=x'(y'^2-2yy'+y^2)-x(y^2-2yy'+y'^2)\\
&=(x'-x)(y'-y)^2.
\end{align*}
\end{proof}

Now we are ready to explain the motivation behind the definition \(w(x, y):=\left(1, y, y^2 ; x y^2,-2 x y, x\right)\). In fact, the definition of \(w\) is not unique. What matters for the construction is
that \(w\) satisfies the following three properties:

\begin{itemize}
\item \emph{Desired symplectic pairing.}
For all \(u=(x,y)\) and \(v=(x',y')\),
\[
\langle w(u),w(v)\rangle
=K(u,v):=(x'-x)(y'-y)^2.
\]
This quantity vanishes exactly when \(u\) and \(v\) lie on a common
axis-parallel line, and it is nonzero when both coordinates differ.
Moreover,
\[
K(u,v)=-K(v,u),
\]
as required of an alternating symplectic pairing over a field of odd
characteristic. The square on \(y'-y\) is essential: without it, the
resulting expression would be symmetric rather than skew-symmetric.

\item \emph{Nonvanishing point labels.}
For every \((x,y)\in\mathbb F_q^2\),
\[
w(x,y)\neq 0.
\]

\item \emph{Low-degree restrictions to axis-parallel lines.}
The restriction of \(w(x,y)\) to either type of axis-parallel line has
coordinatewise degree at most \(2\). Consequently, for every
\(\lambda\in S_\ell^*\), the induced line answer
\[
h_\lambda^\ell(u):=\lambda(w(u))
\]
has degree at most \(2\).
\end{itemize}

Thus, the particular formula \(w(x, y):=\left(1, y, y^2 ; x y^2,-2 x y, x\right)\) is only one possible
choice; any map satisfying these three properties would suffice for the
construction.

\begin{lemma}[Isotropy of the line-label space]
\label{lem:line-label-space-isotropic}
For every axis-parallel line \(\ell\), the subspace \(S_\ell\) is isotropic:
\[
\langle v,v'\rangle=0
\qquad
\text{for all }v,v'\in S_\ell.
\]
Consequently,
\[
W(v)W(v')=W(v+v')
\qquad
\text{for all }v,v'\in S_\ell.
\]
\end{lemma}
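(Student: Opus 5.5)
The plan is to reduce isotropy of \(S_\ell\) to isotropy of its spanning set, and then invoke the multiplication law. The symplectic form \(\langle\cdot,\cdot\rangle\) is bilinear. Every element of \(S_\ell\) is a finite \(\mathbb F_q\)-linear combination of labels \(w(u)\) with \(u\in\ell\). So if \(v=\sum_i \alpha_i w(u_i)\) and \(v'=\sum_j\beta_j w(u'_j)\), with all \(u_i,u'_j\in\ell\), then
\[
\langle v,v'\rangle=\sum_{i,j}\alpha_i\beta_j\langle w(u_i),w(u'_j)\rangle .
\]
It therefore suffices to show that \(\langle w(u),w(u')\rangle=0\) for any two points \(u,u'\) on the same axis-parallel line.

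For the pairwise vanishing I will use Lemma~\ref{lem:point-label-symplectic-product}, which gives
\[
\langle w(x,y),w(x',y')\rangle=(x'-x)(y'-y)^2,
\]
and split into the two line types. On a horizontal line \(y=y'=y_0\), so the factor \((y'-y)^2\) vanishes. On a vertical line \(x=x'=x_0\), so the factor \(x'-x\) vanishes. In both cases the pairing is \(0\), and hence \(S_\ell\) is isotropic.

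For the consequence, I take \(v,v'\in S_\ell\) and apply Lemma~\ref{lem:weyl-multiplication-law}:
\[
W(v)W(v')=\omega^{\frac12\langle v,v'\rangle}W(v+v').
\]
Since \(\langle v,v'\rangle=0\), the phase is \(\omega^0=1\). One point deserves a remark: the exponent \(\tfrac12\langle v,v'\rangle\) is interpreted via the inverse of \(2\) in \(\mathbb F_q\). This is legitimate because \(q\) is odd, and in any case the exponent is \(0\) here.

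There is no real obstacle; the only care needed is to extend pairwise vanishing to the span by bilinearity, rather than checking generators alone.
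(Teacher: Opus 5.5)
Your proposal is correct and follows essentially the same route as the paper. You show pairwise vanishing of $\langle w(u),w(u')\rangle$ for $u,u'\in\ell$ via Lemma~\ref{lem:point-label-symplectic-product}, extend it to all of $S_\ell$ by bilinearity, and then observe that the phase in Lemma~\ref{lem:weyl-multiplication-law} is trivial. Your remark that $\tfrac12$ is read as the inverse of $2$ in $\mathbb F_q$ (legitimate since $q$ is odd) is a harmless clarification that the paper leaves implicit.
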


\begin{proof}
Let \(u=(x,y)\) and \(u'=(x',y')\) be points on \(\ell\).  Then either
\(x=x'\) or \(y=y'\), so their symplectic product is zero by
Lemma~\ref{lem:point-label-symplectic-product}.  Since the vectors
\(w(u)\), \(u\in\ell\), span
\(S_\ell\), bilinearity implies that the symplectic product vanishes on all
of \(S_\ell\times S_\ell\).  Lemma~\ref{lem:weyl-multiplication-law} now gives
\[
W(v)W(v')
=\omega^{\frac12\langle v,v'\rangle}W(v+v')
=W(v+v').
\]
\end{proof}

We will also use Lemma~\ref{lem:line-character-orthogonality}.

\begin{lemma}[Character orthogonality on \(S_\ell\)]
\label{lem:line-character-orthogonality}
For every \(\nu\in S_\ell^*\),
\[
\sum_{v\in S_\ell}\omega^{\nu(v)}
=
\begin{cases}
|S_\ell|,&\nu=0,\\
0,&\nu\neq0.
\end{cases}
\]
Moreover, for every \(v\in S_\ell\),
\[
\sum_{\lambda\in S_\ell^*}\omega^{\lambda(v)}
=
\begin{cases}
|S_\ell|,&v=0,\\
0,&v\neq0.
\end{cases}
\]
\end{lemma}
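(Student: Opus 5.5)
The plan is to reduce both statements to Lemma~\ref{lem:fq-character-orthogonality} by choosing coordinates on \(S_\ell\). Since \(S_\ell\) is a finite-dimensional \(\mathbb F_q\)-vector space, say of dimension \(k\), we fix a basis \(e_1,\ldots,e_k\) of \(S_\ell\). This basis identifies \(S_\ell\) with \(\mathbb F_q^k\) via \(v=\sum_j c_je_j\). It also identifies \(S_\ell^*\) with \(\mathbb F_q^k\) via \(\lambda\mapsto(\lambda(e_1),\ldots,\lambda(e_k))\), a bijection because a linear functional is determined freely by its values on a basis. Under these identifications the pairing becomes \(\lambda(v)=\sum_j \lambda_j c_j\), and \(|S_\ell|=|S_\ell^*|=q^k\).

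For the first identity, we write
\[
\sum_{v\in S_\ell}\omega^{\nu(v)}=\prod_{j=1}^k\sum_{c_j\in\mathbb F_q}\omega^{c_j\nu_j}.
\]
Each factor equals \(q\) if \(\nu_j=0\) and \(0\) otherwise, by Lemma~\ref{lem:fq-character-orthogonality} with \(c=\nu_j\). The product is therefore \(q^k=|S_\ell|\) exactly when all \(\nu_j=0\), that is, when \(\nu=0\), and it is \(0\) otherwise.

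The second identity follows from the same factorization with the roles of \(v\) and \(\lambda\) swapped:
\[
\sum_{\lambda\in S_\ell^*}\omega^{\lambda(v)}=\prod_j\sum_{\lambda_j\in\mathbb F_q}\omega^{\lambda_j c_j}.
\]
This equals \(|S_\ell|\) if every coordinate \(c_j\) vanishes, i.e.\ \(v=0\), and is \(0\) otherwise.

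The only point needing care is checking that \(\lambda\mapsto(\lambda(e_j))_j\) is a bijection \(S_\ell^*\to\mathbb F_q^k\), so that summing over \(S_\ell^*\) really is summing over \(\mathbb F_q^k\). This is standard linear algebra and poses no real obstacle. Alternatively, for \(\nu\neq0\) we can pick \(v_0\) with \(\nu(v_0)=1\) and use translation invariance: the sum \(\Sigma\) satisfies \(\Sigma=\omega\Sigma\), hence \(\Sigma=0\). The analogous argument for \(v\neq0\) uses a functional with \(\lambda_0(v)=1\), obtained by extending \(v\) to a basis.
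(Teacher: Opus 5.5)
Your proof is correct and is essentially the paper's argument: both choose a basis of \(S_\ell\), factor the character sum into one-dimensional sums, and invoke Lemma~\ref{lem:fq-character-orthogonality}. The only difference is that the paper picks a basis adapted to each case (one containing a vector with \(\nu(v_1)\neq0\), or one starting with \(v\) itself together with its dual basis). You instead use a single fixed basis and the coordinate pairing \(\lambda(v)=\sum_j\lambda_jc_j\), which handles both identities and their zero cases uniformly.
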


\begin{proof}
We first prove the first summation.
\begin{itemize}
	\item When \(\nu = 0\): The first sum is clearly
	 \[\sum_{v\in S_\ell}\omega^{\nu(v)} = \sum_{v\in S_\ell}\omega^{0} = \sum_{v\in S_\ell}1 = |S_\ell|.\]
	\item When \(\nu \neq 0\): Choose \(v_1\in S_\ell\) such that
\(\nu(v_1)\neq0\), and extend \(v_1\) to a basis
\[
v_1,v_2,\ldots,v_k
\]
of \(S_\ell\). Every \(v\in S_\ell\) has a unique representation
\[
v=\sum_{j=1}^k t_jv_j,
\qquad t_1,\ldots,t_k\in\mathbb F_q.
\]
Therefore
\begin{align*}
\sum_{v\in S_\ell}\omega^{\nu(v)}
&=
\sum_{t_1,\ldots,t_k\in\mathbb F_q}
\omega^{\sum_{j=1}^k t_j\nu(v_j)}\\
&=
\prod_{j=1}^k
\left(
\sum_{t_j\in\mathbb F_q}
\omega^{t_j\nu(v_j)}
\right).
\end{align*}
The factor corresponding to \(j=1\) is zero by
Lemma~\ref{lem:fq-character-orthogonality}, applied with
\(c=\nu(v_1)\neq0\). Hence
\[
\sum_{v\in S_\ell}\omega^{\nu(v)}=0.
\]
\end{itemize}

We next prove the identity involving a fixed \(v\in S_\ell\).
\begin{itemize}
\item When \(v=0\):
Then every summand is \(1\), and
\[
\sum_{\lambda\in S_\ell^*}\omega^{\lambda(v)}
=|S_\ell^*|
=|S_\ell|.
\]

\item When \(v\neq0\): Let \(v_1 = v\), extend \(v_1\) to a basis
\[
v_1,v_2,\ldots,v_k
\]
of \(S_\ell\), and let
\[
\lambda_1,\lambda_2,\ldots,\lambda_k
\]
be the corresponding dual basis of \(S_\ell^*\). Every
\(\lambda\in S_\ell^*\) has a unique representation
\[
\lambda=\sum_{j=1}^k s_j\lambda_j,
\qquad s_1,\ldots,s_k\in\mathbb F_q.
\]
Since \(\lambda_j(v_1)=\delta_{j,1}\), we have \(\lambda(v)=s_1\).
Consequently,
\begin{align*}
\sum_{\lambda\in S_\ell^*}\omega^{\lambda(v)}
&=
\sum_{s_1,\ldots,s_k\in\mathbb F_q}\omega^{s_1}\\
&=
q^{k-1}\sum_{s_1\in\mathbb F_q}\omega^{s_1}\\
&=0,
\end{align*}
where the last equality follows from
Lemma~\ref{lem:fq-character-orthogonality}, applied with \(c=1\).
\end{itemize}

\end{proof}

Now we are ready to show that the family
\(\{P_\lambda^\ell\}_{\lambda\in S_\ell^*}\) is a projective measurement.

\begin{proposition}[Line measurements]
\label{prop:line-measurements}
For every axis-parallel line \(\ell\), the family
\(\{P_\lambda^\ell\}_{\lambda\in S_\ell^*}\) is a projective measurement on
\(\mathcal H\).  Explicitly, for all \(\lambda,\mu\in S_\ell^*\),
\[
(P_\lambda^\ell)^\dagger=P_\lambda^\ell,\qquad
P_\lambda^\ell P_\mu^\ell
=\delta_{\lambda,\mu}P_\lambda^\ell,\qquad
\sum_{\lambda\in S_\ell^*}P_\lambda^\ell=I.
\]
Consequently, grouping these projections according to
\(h_\lambda^\ell=g\) makes
\(\{B_g^\ell\}_{g:\ell\to\mathbb F_q,\,\deg(g)\leq 2}\) a projective
measurement.
\end{proposition}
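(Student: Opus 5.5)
The plan is to mirror the proof of Proposition~\ref{prop:point-measurements}, replacing the one-dimensional character sums over \(\mathbb F_q\) with the sums over \(S_\ell\) and \(S_\ell^*\) from Lemma~\ref{lem:line-character-orthogonality}. Group multiplicativity of \(v\mapsto W(v)\) on \(S_\ell\) comes from the isotropy of Lemma~\ref{lem:line-label-space-isotropic}.

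\emph{Hermiticity.} By Lemma~\ref{lem:weyl-adjoint},
\[
(P_\lambda^\ell)^\dagger=\frac1{|S_\ell|}\sum_{v\in S_\ell}\omega^{\lambda(v)}W(-v).
\]
Substituting \(v\mapsto -v\), which is a bijection of the subspace \(S_\ell\), and using \(\lambda(-v)=-\lambda(v)\) recovers \(P_\lambda^\ell\).

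\emph{Orthogonality.} Lemma~\ref{lem:line-label-space-isotropic} gives
\[
P_\lambda^\ell P_\mu^\ell=\frac1{|S_\ell|^2}\sum_{v,v'\in S_\ell}\omega^{-\lambda(v)-\mu(v')}W(v+v').
\]
Reindex with \(k=v+v'\) and \(v'=k-v\), both ranging over \(S_\ell\). This produces
\[
\frac1{|S_\ell|^2}\sum_{k}\omega^{-\mu(k)}W(k)\sum_{v}\omega^{(\mu-\lambda)(v)}.
\]
The inner sum equals \(|S_\ell|\delta_{\lambda,\mu}\) by the first part of Lemma~\ref{lem:line-character-orthogonality} with \(\nu=\mu-\lambda\in S_\ell^*\). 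Hence \(P_\lambda^\ell P_\mu^\ell=\delta_{\lambda,\mu}P_\lambda^\ell\).

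\emph{Completeness.} Swap the sums:
\[
\sum_\lambda P_\lambda^\ell=\frac1{|S_\ell|}\sum_{v}\Bigl(\sum_\lambda\omega^{-\lambda(v)}\Bigr)W(v).
\]
The second part of Lemma~\ref{lem:line-character-orthogonality}, applied to \(-v\), kills every term with \(v\neq0\) and leaves \(W(0)=I\).

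For the consequence, note that the map \(\lambda\mapsto h_\lambda^\ell\) sends \(S_\ell^*\) into the set of functions \(\ell\to\mathbb F_q\) of degree at most \(2\). This was shown in the horizontal/vertical computation before the proposition. So the fibers \(\{\lambda: h_\lambda^\ell=g\}\), for \(g\) ranging over the legal answers, partition \(S_\ell^*\); some fibers may be empty, in which case \(B_g^\ell=0\). A sum of mutually orthogonal Hermitian projections over disjoint index sets is a Hermitian projection. Distinct \(g\) use disjoint fibers, so \(B_g^\ell B_{g'}^\ell=0\) for \(g\neq g'\), and summing over all \(g\) gives \(\sum_\lambda P_\lambda^\ell=I\).

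The main obstacle is the orthogonality step. It is the only place where the isotropy of \(S_\ell\) is genuinely needed: without it, the phase \(\omega^{\frac12\langle v,v'\rangle}\) would survive and the reindexing would fail. Two routine points also need checking. First, the half-integer exponent is meaningful because \(2\) is invertible modulo odd \(q\). Second, \(\lambda-\mu\) and \(-v\) stay in \(S_\ell^*\) and \(S_\ell\) respectively, so Lemma~\ref{lem:line-character-orthogonality} applies.
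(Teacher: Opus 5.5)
Your proposal is correct and follows the paper's proof essentially line for line. You use the same four steps: the adjoint formula plus the substitution \(v\mapsto -v\) for Hermiticity, isotropy of \(S_\ell\) with the reindexing \(z=v+v'\) and the first character identity for orthogonality, the second character identity applied to \(-v\) for completeness, and the fiber-grouping argument for \(\{B_g^\ell\}\). Your extra remarks, that empty fibers give \(B_g^\ell=0\) and that \(2\) is invertible modulo odd \(q\), are harmless clarifications the paper leaves implicit.
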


\begin{proof}
\noindent\emph{Hermiticity.}
By Lemma~\ref{lem:weyl-adjoint}, linearity of \(\lambda\), and the change of
variables \(z=-v\), we obtain
\begin{align*}
(P_\lambda^\ell)^\dagger
&=\frac{1}{|S_\ell|}\sum_{v\in S_\ell}\omega^{\lambda(v)}W(-v)\\
&=\frac{1}{|S_\ell|}\sum_{z\in S_\ell}\omega^{-\lambda(z)}W(z)
=P_\lambda^\ell.
\end{align*}

\par\medskip\noindent\emph{Orthogonality.}
Lemma~\ref{lem:line-label-space-isotropic} implies that
\(W(v)W(v')=W(v+v')\) for \(v,v'\in S_\ell\).  Therefore
\begin{align*}
P_\lambda^\ell P_\mu^\ell
&=\frac{1}{|S_\ell|^2}\sum_{v,v'\in S_\ell}
  \omega^{-\lambda(v)-\mu(v')}W(v+v')\\
&=\frac{1}{|S_\ell|^2}\sum_{z\in S_\ell}\omega^{-\mu(z)}W(z)
  \sum_{v\in S_\ell}\omega^{(\mu-\lambda)(v)}\\
&=\delta_{\lambda,\mu}\frac{1}{|S_\ell|}
  \sum_{z\in S_\ell}\omega^{-\lambda(z)}W(z)\\
&=\delta_{\lambda,\mu}P_\lambda^\ell,
\end{align*}
where in the second line we set \(z=v+v'\), and in the third line we used
the first identity of Lemma~\ref{lem:line-character-orthogonality} with
\(\nu=\mu-\lambda\).  Taking \(\lambda=\mu\) shows that each
\(P_\lambda^\ell\) is a projection, while taking \(\lambda\neq\mu\) shows
that distinct projections are orthogonal.

\par\medskip\noindent\emph{Completeness.}
Interchanging the sums and applying the second identity of
Lemma~\ref{lem:line-character-orthogonality} to \(-v\) gives
\begin{align*}
\sum_{\lambda\in S_\ell^*}P_\lambda^\ell
&=\frac{1}{|S_\ell|}\sum_{v\in S_\ell}
  \left(\sum_{\lambda\in S_\ell^*}\omega^{-\lambda(v)}\right)W(v)\\
&=W(0)=I.
\end{align*}
This proves all three identities for the operators \(P_\lambda^\ell\).

Because every \(h_\lambda^\ell\) has degree at most \(2\), the fibers
\(\{\lambda:h_\lambda^\ell=g\}\) partition \(S_\ell^*\) as \(g\) ranges over
the legal line answers.  Grouping the mutually orthogonal projections within
each fiber therefore shows that
\[
B_g^\ell=\sum_{\lambda:h_\lambda^\ell=g}P_\lambda^\ell
\]
defines a projective measurement.
\end{proof}

\section{Perfect acceptance}
\label{sec:perfect-acceptance}
In this section, we prove that the constructed strategy passes the self-consistency test and the axis-parallel lines test with probability \(1\).

\subsection{Perfect self-consistency}
Given \(u\in \mathbb{F}_q^2\), suppose Alice's output is \(a\in \mathbb{F}_q\) and Bob's output is \(b \in \mathbb{F}_q\), the probability of this event is 
\[p_{\text{self}}(a, b) = \bra{\Phi}A_a^u\otimes(A_b^u)^{\mathrm{T}}\ket{\Phi}.\]

To proceed, we introduce the following special identity for maximally entangled state.

\begin{lemma}[Maximally entangled-state identity]
\label{lem:maximally-entangled-identity}
For all operators \(M,N\) on \(\mathcal H\),
\[
\bra\Phi M\otimes N\ket\Phi
=\frac{1}{\dim\mathcal H}\operatorname{Tr}(MN^{\mathrm T}).
\]
\end{lemma}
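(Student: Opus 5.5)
Both sides of the identity are linear in \(M\) and in \(N\), so I will expand \(\ket\Phi\) in the computational basis rather than invoke any structural argument. Writing \(D=\dim\mathcal H=q^3\), we have
\[
\bra\Phi M\otimes N\ket\Phi
=\frac1D\sum_{z,z'\in\mathbb F_q^3}\bra{z}M\ket{z'}\,\bra{z}N\ket{z'},
\]
because \(\bra{z}\otimes\bra{z}\,(M\otimes N)\,\ket{z'}\otimes\ket{z'}\) factors as the product of the two matrix entries.

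Next, I will use the defining property of the transpose with respect to the computational basis: \(\bra{z}N\ket{z'}=\bra{z'}N^{\mathrm T}\ket{z}\). Substituting this gives
\[
\frac1D\sum_{z,z'}\bra{z}M\ket{z'}\bra{z'}N^{\mathrm T}\ket{z}.
\]
Summing over \(z'\) with the resolution of the identity \(\sum_{z'}\ket{z'}\bra{z'}=I\) turns this into \(\frac1D\sum_z\bra{z}MN^{\mathrm T}\ket{z}\). That quantity is \(\frac1D\operatorname{Tr}(MN^{\mathrm T})\), which is the claimed identity.

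The computation is routine, so the only point requiring care is convention. The transpose must be taken in the same basis \(\{\ket z\}\) that is used to define \(\ket\Phi\); this is exactly how the paper defines Bob's transposed measurements. Nothing else in the argument is delicate.
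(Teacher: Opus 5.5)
Your proposal is correct and takes essentially the same route as the paper, which expands \(\ket\Phi\) in the computational basis to get \(\frac1{\dim\mathcal H}\sum_{i,j}M_{ij}N_{ij}\) and identifies this sum with \(\operatorname{Tr}(MN^{\mathrm T})\). You spell out that last identification through \(\bra{z}N\ket{z'}=\bra{z'}N^{\mathrm T}\ket{z}\) and a resolution of the identity. Your remark that the transpose must be taken in the same basis used to define \(\ket\Phi\) is the right caveat.
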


\begin{proof}
Writing matrix entries in the computational basis gives
\[
\bra\Phi M\otimes N\ket\Phi
=\frac{1}{\dim\mathcal H}\sum_{i,j}M_{ij}N_{ij}
=\frac{1}{\dim\mathcal H}\operatorname{Tr}(MN^{\mathrm T}).
\]
\end{proof}

By Lemma~\ref{lem:maximally-entangled-identity}, we have
\[
\begin{aligned}
p_{\text{self}}(a, b) &= \bra{\Phi}A_a^u\otimes(A_b^u)^{\mathrm{T}}\ket{\Phi}\\
&= \frac{1}{\dim \mathcal{H}}
   \operatorname{Tr}\!\left(A_a^u A_b^u\right)\\
&= \frac{\delta_{a,b}}{\dim \mathcal{H}}
   \operatorname{Tr}\!\left(A_a^u\right).
\end{aligned}
\]
In particular, \(p_{\mathrm{self}}(a,b)=0\) whenever \(a\neq b\). Therefore Alice and Bob
always give the same answer, so the strategy passes the self-consistency test
with probability \(1\).

\subsection{Perfect axis-parallel acceptance}
Given an axis-parallel line \(\ell \subseteq \mathbb{F}_q^2\) and \(u\in \ell\), suppose Alice's point measurement output is \(a\in \mathbb{F}_q\) and Bob's line measurement output is the polynomial \(g\colon \ell \to \mathbb{F}_q\). The probability of this event is
\[p_{\text{axis}}(a, g) = \bra{\Phi}A_a^u\otimes(B_g^\ell)^{\mathrm{T}}\ket{\Phi}.\]

To proceed, we need the following refinement lemma.
\begin{lemma}[Line measurements refine point measurements]
\label{lem:line-refinement}
For every axis-parallel line \(\ell\), \(u\in\ell\), and \(a\in \mathbb{F}_q\),
\[
  A^u_a
  =\sum_{g(u)=a}B^\ell_g.
\]
\end{lemma}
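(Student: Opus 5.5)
The plan is to unfold both sides in terms of the projections \(P_\lambda^\ell\) and reduce the identity to a statement about Fourier transforms on \(S_\ell\). By the definition of \(B_g^\ell\), and because the fibers \(\{\lambda : h_\lambda^\ell = g\}\) partition \(S_\ell^*\) (Proposition~\ref{prop:line-measurements}), we have
\[
\sum_{g(u)=a}B_g^\ell=\sum_{\lambda\in S_\ell^*:\,\lambda(w(u))=a}P_\lambda^\ell ,
\]
since \(g(u)=h_\lambda^\ell(u)=\lambda(w(u))\) whenever \(h_\lambda^\ell=g\). It therefore suffices to show that the right-hand side equals \(A_a^u\).

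Next I substitute the definition of \(P_\lambda^\ell\) and interchange the sums:
\[
\sum_{\lambda:\,\lambda(w(u))=a}P_\lambda^\ell=\frac1{|S_\ell|}\sum_{v\in S_\ell}\Bigl(\sum_{\lambda:\,\lambda(w(u))=a}\omega^{-\lambda(v)}\Bigr)W(v).
\]
I write the constraint with an indicator via Lemma~\ref{lem:fq-character-orthogonality}:
\[
\mathbf 1[\lambda(w(u))=a]=\frac1q\sum_{t\in\mathbb F_q}\omega^{t(\lambda(w(u))-a)}.
\]
The inner sum then becomes
\[
\frac1q\sum_t\omega^{-ta}\sum_{\lambda\in S_\ell^*}\omega^{\lambda(tw(u)-v)}.
\]
By the second identity of Lemma~\ref{lem:line-character-orthogonality}, which applies because \(tw(u)-v\in S_\ell\) (note \(w(u)\in S_\ell\) since \(u\in\ell\)), the sum over \(\lambda\) equals \(|S_\ell|\,\delta_{v,tw(u)}\).

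Substituting back, the \(v\)-sum collapses to the terms \(v=tw(u)\). This gives
\[
\frac1q\sum_t\omega^{-ta}W(tw(u))=A_a^u.
\]
One subtlety is that distinct \(t\) must give distinct \(v\); this holds because \(w(u)\neq0\), as its first coordinate is \(1\).

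The main point requiring care is the bookkeeping in the first step: confirming that every \(\lambda\) lies in exactly one fiber with a legal degree-\(\le 2\) answer \(g\), and that the evaluation \(g(u)\) is well defined as \(\lambda(w(u))\). Both facts were already established before Proposition~\ref{prop:line-measurements}. The rest is routine character-sum manipulation and needs no isotropy.
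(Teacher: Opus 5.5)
Your proof is correct. It matches the paper up to the interchange of sums, and differs in how you evaluate the inner sum \(\sum_{\lambda:\lambda(w(u))=a}\omega^{-\lambda(v)}\).

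The paper evaluates this sum by splitting into two cases. When \(v=tw(u)\), it counts the \(q^{k-1}\) functionals with \(\lambda(w(u))=a\) using a dual basis adapted to \(w(u)\). When \(v\notin\{tw(u):t\in\mathbb F_q\}\), it extends \(\{w(u),v\}\) to a basis and shows the sum vanishes.

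You instead Fourier-expand the indicator of the constraint over \(\mathbb F_q\). This reduces everything to the second identity of Lemma~\ref{lem:line-character-orthogonality}, giving \(\frac{|S_\ell|}{q}\sum_t\omega^{-ta}\delta_{v,tw(u)}\) in one line, with no case split and no new basis constructions.

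Your route also does not need \(w(u)\neq0\). After substitution, interchanging the \(v\)- and \(t\)-sums gives
\(\frac1q\sum_t\omega^{-ta}\sum_{v}\delta_{v,tw(u)}W(v)=\frac1q\sum_t\omega^{-ta}W(tw(u))\)
for each \(t\) separately. So your remark that distinct \(t\) must give distinct \(v\) is harmless but unnecessary. In the paper's counting argument, by contrast, \(w(u)\neq0\) is essential to get exactly \(q^{k-1}\) functionals in the fiber.

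What the paper's route buys in exchange is an explicit fiber size, which makes the origin of the normalization \(|S_\ell|/q\) more visible. You are also right that neither argument uses isotropy. Isotropy matters only for \(\{P_\lambda^\ell\}\) to be projective, not for this operator identity.
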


\begin{proof}
By the definitions of \(B_g^\ell\) and \(h_\lambda^\ell\), and because every
\(g,h_\lambda^\ell\) has degree at most \(2\),
\begin{align*}
\sum_{g(u)=a}B_g^\ell
&=\sum_{g(u)=a}\;
  \sum_{\lambda:h_\lambda^\ell=g}P_\lambda^\ell\\
&=\sum_{\lambda: h_\lambda^\ell(u)=a}P_\lambda^\ell\\
&=\sum_{\lambda:\lambda(w(u))=a}P_\lambda^\ell\\
&=\frac1{|S_\ell|}\sum_{v\in S_\ell}
 \left(\sum_{\lambda:\lambda(w(u))=a}\omega^{-\lambda(v)}\right)W(v)
\end{align*}
We will show the following identity:
\[
\sum_{\lambda:\lambda(w(u))=a}\omega^{-\lambda(v)}
=
\begin{cases}
\dfrac{|S_\ell|}{q}\,\omega^{-ta},&v=tw(u),\\[2mm]
0,&v\notin\{t w(u):t\in\mathbb F_q\}.
\end{cases}
\]

\begin{itemize}
	\item When \(v=tw(u)\) for some \(t\in \mathbb{F}_q\), the sum is
	\[
	\sum_{\lambda:\lambda(w(u))=a}\omega^{-\lambda(v)} = \sum_{\lambda:\lambda(w(u))=a}\omega^{-t\lambda(w(u))} = |\{\lambda:\lambda(w(u))=a\}|\cdot \omega^{-ta}
	\]
	We now count the linear functionals \(\lambda\) satisfying
	\(\lambda(w(u))=a\).  Write
	\[
	k:=\dim_{\mathbb F_q}S_\ell.
	\]
	The first coordinate of \(w(u)\) is \(1\), so \(w(u)\neq0\).  We can
	therefore choose a basis of \(S_\ell\) of the form
	\[
	w(u),e_2,\ldots,e_k.
	\]
	Let \(\{\phi_1,\ldots,\phi_k\}\) be the dual basis of
	\(\{w(u),e_2,\ldots,e_k\}\).  Then
	\[
	\lambda(w(u))=a
	\quad\Longleftrightarrow\quad
	\lambda=a\phi_1+\sum_{j=2}^k c_j\phi_j,
	\qquad c_j\in\mathbb F_q.
	\]
	Hence there are \(q^{k-1}\) such functionals, and
	\[
	q^{k-1}=\frac{|S_\ell|}{q}.
	\]
	It follows that
	\[
	\sum_{\lambda:\lambda(w(u))=a}\omega^{-\lambda(v)}
	=\frac{|S_\ell|}{q}\,\omega^{-ta}.
	\]

		\item When \(v\notin\{t w(u):t\in\mathbb F_q\}\), the vectors \(w(u)\) and
		\(v\) are linearly independent.  Extend \(\{w(u),v\}\) to a basis of
		\(S_\ell\), and let \(\{\phi_1,\ldots,\phi_k\}\) be its dual basis.
		Then every \(\lambda\) with \(\lambda(w(u))=a\) has the form
		\[
		\lambda=a\phi_1+\sum_{j=2}^k c_j\phi_j,
		\]
		and \(\lambda(v)=c_2\).  Therefore
		\[
		\sum_{\lambda:\lambda(w(u))=a}\omega^{-\lambda(v)}
		=q^{k-2}\sum_{c_2\in\mathbb F_q}\omega^{-c_2}=0.
		\]
	\end{itemize}

Finally, substituting the identity into the preceding sum proves
\begin{align*}
\frac1{|S_\ell|}\sum_{v\in S_\ell}
 \left(\sum_{\lambda:\lambda(w(u))=a}\omega^{-\lambda(v)}\right)W(v)
&=\frac1q\sum_{t\in\mathbb F_q}\omega^{-ta}W(tw(u))\\
&=A_a^u.
\end{align*}
\end{proof}

By Lemma~\ref{lem:line-refinement}, for every legal line answer \(g\),

\[
B_g^\ell\leq A^u_{g(u)}.
\]
Indeed, \(B_g^\ell\) is one of the mutually orthogonal summands in the
expression
\[
A^u_{g(u)}=\sum_{f(u)=g(u)}B_f^\ell.
\]
Consequently, if \(a\neq g(u)\), then the orthogonality of \(B_f^\ell\) gives
\(
A_a^uB_g^\ell=0.
\)

Using Lemma~\ref{lem:maximally-entangled-identity}, we therefore obtain
\begin{align*}
p_{\mathrm{axis}}(a,g)
&=\bra{\Phi}A_a^u\otimes(B_g^\ell)^{\mathrm T}\ket{\Phi}\\
&=\frac{1}{\dim\mathcal H}
  \operatorname{Tr}\!\left(A_a^uB_g^\ell\right)\\
&=0
\qquad\text{whenever }a\neq g(u).
\end{align*}
Thus every pair of answers that violates the verifier's condition
\(a=g(u)\) has probability zero.

The reverse orientation is identical.  If Alice receives the line and Bob
receives the point, then
\begin{align*}
p_{\mathrm{axis}}^{\mathrm{rev}}(g,a)
&=\bra{\Phi}B_g^\ell\otimes(A_a^u)^{\mathrm T}\ket{\Phi}\\
&=\frac{1}{\dim\mathcal H}
  \operatorname{Tr}\!\left(B_g^\ell A_a^u\right)\\
&=\frac{1}{\dim\mathcal H}
  \operatorname{Tr}\!\left(A_a^uB_g^\ell\right)\\
&=0
\qquad\text{whenever }a\neq g(u),
\end{align*}
Hence the strategy passes both orientations of the axis-parallel
line-point test with probability \(1\).

\section{Obstruction to global quantum soundness}
\label{sec:global-obstruction}

In this section, we prove the global-measurement obstruction in our main result: every global function-valued projective measurement has average point-consistency with the constructed measurements at most
\[
\mathbb{E}_{\bu \sim \F_q^2}\sum_{a \in \mathbb{F}_q} \sum_{g:g(\bu) = a} \bra{\psi} A^{\bu}_{a} \otimes G_g \ket{\psi}
\le 1-\frac1{54}.
\]Thus, this consistency is bounded away from \(1\) by a universal constant.

We will prove this in two steps. 
For \(u,v\in\mathbb F_q^2\), define
\[
\Gamma(u,v):=
\sum_{a,b\in\mathbb F_q}
\bigl\|([A_a^u,A_b^v]\otimes I)\ket\Phi\bigr\|^2.
\]

First, we will prove that
\[
\mathbb E_{\bu,\bv}\Gamma(\bu,\bv)
=2\left(1-\frac1q\right)^3.
\]

Second, we will prove that 
\[
\mathbb{E}_{\bu \sim \F_q^2}\sum_{a \in \mathbb{F}_q} \sum_{g:g(\bu) = a} \bra{\Phi} A^{\bu}_{a} \otimes G_g \ket{\Phi}
\le 1 - \frac1{32}\mathbb E_{\bu,\bv}\Gamma(\bu,\bv).
\]
Combining these two steps and the fact that \(q \ge 3\), the main result follows.

\subsection{Exact commutator calculation}
\label{sec:exact-commutator}

In this subsection, we prove the identity
\[
\mathbb E_{\bu,\bv}\Gamma(\bu,\bv)
=2\left(1-\frac1q\right)^3.
\]
To get a closed form of \(\Gamma(u,v)\), we will use the following
Lemma~\ref{lem:weyl-trace-orthogonality}.

\begin{lemma}[Weyl trace orthogonality]
\label{lem:weyl-trace-orthogonality}
For all \(v,v'\in\mathbb F_q^3\oplus\mathbb F_q^3\),
\[
\frac{1}{\dim\mathcal H}
\operatorname{Tr}\bigl(W(v')^\dagger W(v)\bigr)
=\delta_{v',v}.
\]
\end{lemma}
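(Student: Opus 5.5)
The plan is to reduce the trace to a single product of phase and shift operators and then evaluate it in the computational basis. Write \(v=(p,r)\) and \(v'=(p',r')\).

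First, Lemma~\ref{lem:weyl-adjoint} gives \(W(v')^\dagger=W(-v')\). Lemma~\ref{lem:weyl-multiplication-law} then gives
\[
W(v')^\dagger W(v)=W(-v')W(v)=\omega^{\frac12\langle -v',v\rangle}W(v-v'),
\]
where the prefactor is a phase of modulus one. It therefore suffices to show that \(\operatorname{Tr}W(c)=0\) whenever \(c\neq0\), and to handle \(c=0\) separately.

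The case \(v=v'\) is immediate. There \(W(0)=I\) and the phase is \(\omega^{0}=1\), because the symplectic form is alternating. Hence the normalized trace equals \(1\).

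For \(c=(p'',r'')\neq0\), we have \(W(c)=\omega^{-\frac12p''\cdot r''}Z(p'')X(r'')\), and the trace is computed in the standard basis:
\[
\operatorname{Tr}\bigl(Z(p'')X(r'')\bigr)=\sum_{z\in\mathbb F_q^3}\bra z Z(p'')\ket{z+r''}=\sum_{z}\omega^{p''\cdot(z+r'')}\braket{z|z+r''}.
\]
This splits into two cases.
\begin{itemize}
\item If \(r''\neq0\), every diagonal entry vanishes, so the trace is \(0\).
\item If \(r''=0\), then \(p''\neq0\) and the trace is \(\sum_{z}\omega^{p''\cdot z}\). Choose a coordinate \(j\) with \(p''_j\neq0\). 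The sum factorizes over the coordinates of \(z\), and the \(j\)-th factor vanishes by Lemma~\ref{lem:fq-character-orthogonality}.
\end{itemize}
Either way the trace is zero, which gives \(\delta_{v',v}\).

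There is no serious obstacle in this argument. The only point needing care is that the half-integer exponent \(\frac12\) is interpreted as the inverse of \(2\) in \(\mathbb F_q\), which is valid for odd \(q\). This guarantees the phases are well-defined roots of unity of modulus one, and so they do not affect whether the trace vanishes.
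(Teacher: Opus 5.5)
Your proof is correct and follows essentially the same route as the paper. You reduce \(W(v')^\dagger W(v)\) to a unit-modulus phase times \(W(v-v')\) using the adjoint and multiplication lemmas. You then show that a single Weyl operator is traceless unless its label is zero, by splitting on whether the shift part vanishes and applying character orthogonality to a nonzero coordinate of the phase part.
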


\begin{proof}
First consider a single Weyl operator \(W(p,r)\).  If \(r\neq0\), then
\(X(r)\ket z=\ket{z+r}\neq\ket z\) for every \(z\), so \(W(p,r)\) has no
nonzero diagonal entries and its trace is zero.  If \(r=0\), then
\[
\operatorname{Tr}(W(p,0))
=\operatorname{Tr}(Z(p))
=\sum_{z\in\mathbb F_q^3}\omega^{p\cdot z}.
\]
This equals \(\dim\mathcal H=q^3\) when \(p=0\), and it is zero when
\(p\neq0\) by Lemma~\ref{lem:fq-character-orthogonality}, applied to a
coordinate for which \(p\) is nonzero.

By Lemmas~\ref{lem:weyl-adjoint} and~\ref{lem:weyl-multiplication-law},
\[
W(v')^\dagger W(v)
=W(-v')W(v)
=\omega^{\frac12\langle-v',v\rangle}W(v-v').
\]
The preceding trace calculation is therefore zero unless \(v=v'\).  When
\(v=v'\), the operator is \(I\), whose trace is \(\dim\mathcal H\).
\end{proof}

\begin{theorem}[Exact pointwise commutator]
\label{thm:exact-pointwise-commutator}
Fix \(u,v\in\mathbb F_q^2\), and set \(c(u,v):=\langle w(u),w(v)\rangle\). Then we have
\[
\Gamma(u,v)=
\begin{cases}
0,
&c(u, v)=0,\\[2mm]
\dfrac{2(q-1)}q,
&c(u, v)\neq0.
\end{cases}
\]
\end{theorem}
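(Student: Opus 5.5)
We expand both point projections in the Weyl basis and compute the commutator coefficient by coefficient. Write \(w=w(u)\), \(w'=w(v)\) and \(c=c(u,v)\). By the definition of \(A^u_a\) and Lemma~\ref{lem:weyl-multiplication-law},
\[
A_a^uA_b^v=\frac1{q^2}\sum_{s,t\in\mathbb F_q}\omega^{-sa-tb}\,\omega^{\frac12 st c}\,W(sw+tw'),
\qquad
A_b^vA_a^u=\frac1{q^2}\sum_{s,t}\omega^{-sa-tb}\,\omega^{-\frac12 st c}\,W(sw+tw').
\]
Hence
\[
[A_a^u,A_b^v]=\frac1{q^2}\sum_{s,t}\omega^{-sa-tb}\bigl(\omega^{\frac12 stc}-\omega^{-\frac12stc}\bigr)W(sw+tw').
\]
If \(c=0\), every coefficient vanishes, so \(\Gamma(u,v)=0\). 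This settles the first case.

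Now assume \(c\neq0\). The key structural fact is that \(w\) and \(w'\) are linearly independent. Indeed, if \(w'=\alpha w\), then \(c=\alpha\langle w,w\rangle=0\), a contradiction. Consequently the map \((s,t)\mapsto sw+tw'\) is injective, and the Weyl operators appearing above are pairwise distinct.

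Next we convert the norm into a trace. For any operator \(M\), Lemma~\ref{lem:maximally-entangled-identity} gives
\[
\|(M\otimes I)\ket\Phi\|^2=\bra\Phi M^\dagger M\otimes I\ket\Phi=\tfrac1{q^3}\operatorname{Tr}(M^\dagger M).
\]
Applying this to the commutator and using Lemma~\ref{lem:weyl-trace-orthogonality} together with injectivity, the cross terms vanish and we obtain
\[
\bigl\|([A_a^u,A_b^v]\otimes I)\ket\Phi\bigr\|^2=\frac1{q^4}\sum_{s,t}\bigl|\omega^{\frac12stc}-\omega^{-\frac12stc}\bigr|^2.
\]
This expression is independent of \(a\) and \(b\). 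Summing over the \(q^2\) pairs \((a,b)\) therefore multiplies it by \(q^2\), giving
\[
\Gamma(u,v)=\frac1{q^2}\sum_{s,t}\bigl(2-\omega^{stc}-\omega^{-stc}\bigr).
\]
Here \(\omega^{\frac12 x}\) is interpreted as \(\omega^{2^{-1}x}\) with \(2^{-1}\in\mathbb F_q\), which is legitimate since \(q\) is odd.

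It remains to evaluate the character sums. For fixed \(s\), Lemma~\ref{lem:fq-character-orthogonality} applied with coefficient \(sc\) gives \(\sum_t\omega^{\pm stc}=q\,\delta_{s,0}\), because \(c\neq0\). Summing over \(s\) then yields \(\sum_{s,t}\omega^{\pm stc}=q\). Therefore
\[
\Gamma(u,v)=\frac{2q^2-2q}{q^2}=\frac{2(q-1)}{q},
\]
as claimed.

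The only delicate points are the injectivity argument, which is what lets trace orthogonality kill all cross terms, and the consistent handling of the half-integer phases.
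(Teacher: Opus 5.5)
Your proposal is correct and follows essentially the same route as the paper: expand the commutator in the Weyl basis, use linear independence of \(w(u),w(v)\) together with Weyl trace orthogonality to kill the cross terms, and evaluate the resulting character sum. The only cosmetic difference is that you write the coefficients against \(W(sw+tw')\) with symmetric phases \(\omega^{\pm\frac12 stc}\), whereas the paper writes them against \(U_{s,t}=W(sw)W(tw')\) with coefficient \(1-\omega^{-stc}\); the two have the same modulus.
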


\begin{proof}

Lemma~\ref{lem:weyl-multiplication-law}, applied in both orders, gives
\[
W(sw(u))W(tw(v))
=\omega^{st\,c(u,v)}W(tw(v))W(sw(u))
\]
for every \(s,t\in\mathbb F_q\).  
\begin{itemize}
	\item If \(c(u,v)=0\):
All the Weyl operators
appearing in \(A_a^u\) commute with those appearing in \(A_b^v\).  Hence
\([A_a^u,A_b^v]=0\) for every \(a,b\), and \(\Gamma(u,v)=\sum_{a,b\in\mathbb F_q}
\bigl\|([A_a^u,A_b^v]\otimes I)\ket\Phi\bigr\|^2=0\).

\item If \(c(u,v)\neq0\): For \(s,t\in\mathbb F_q\), define
\[
U_{s,t}:=W(sw(u))W(tw(v)),
\qquad
\alpha_{s,t}:=\omega^{-as-bt}
\left(1-\omega^{-st\,c(u,v)}\right).
\]
Substituting the defining character-sum formulas for the point measurements gives
\[
[A_a^u,A_b^v]
=\frac1{q^2}\sum_{s,t\in\mathbb F_q}
\alpha_{s,t}U_{s,t}.
\]
Since \(c(u, v) = \langle w(u), w(v)\rangle \neq 0\), the vectors \(w(u)\) and \(w(v)\) are linearly independent.  Therefore the
vectors
\[
sw(u)+tw(v),\qquad (s,t)\in\mathbb F_q^2,
\]
are all distinct.  By Lemma~\ref{lem:weyl-multiplication-law},
\(W(sw(u))W(tw(v))\) is a phase multiple of
\(W(sw(u)+tw(v))\).  Lemma~\ref{lem:weyl-trace-orthogonality}, together with
the distinctness above, gives
\[
\frac1{\dim\mathcal H}\operatorname{Tr}
\!\left(U_{s,t}^\dagger U_{s',t'}\right)
=\delta_{s,s'}\delta_{t,t'}.
\]
Then the preceding commutator formula and
Lemma~\ref{lem:maximally-entangled-identity} give
\[
\begin{aligned}
\bigl\|([A_a^u,A_b^v]\otimes I)\ket\Phi\bigr\|^2
&=\bra{\Phi}(([A_a^u,A_b^v])^\dagger([A_a^u,A_b^v])\otimes I) \ket{\Phi}\\
&=\frac1{\dim\mathcal H}\operatorname{Tr}\!\left(
  [A_a^u,A_b^v]^\dagger[A_a^u,A_b^v]\right)
  \qquad \text{(By Lemma~\ref{lem:maximally-entangled-identity})}\\
&=\frac1{\dim\mathcal H}\operatorname{Tr}\left(\frac1{q^4}\sum_{s,t,s',t'\in\mathbb F_q}
  \overline{\alpha_{s,t}}\alpha_{s',t'}
  U_{s,t}^\dagger U_{s',t'}\right)\\
&=\frac1{q^4}\sum_{s,t\in\mathbb F_q}|\alpha_{s,t}|^2\\
&=\frac1{q^4}\sum_{s,t\in\mathbb F_q}
\left|1-\omega^{-st\,c(u,v)}\right|^2.
\end{aligned}
\]
For each \(s\neq0\), the map \(t\mapsto st\,c(u,v)\) is a bijection of
\(\mathbb F_q\), while the terms with \(s=0\) vanish.  Moreover,
\begin{align*}
\sum_{z\in\mathbb F_q}|1-\omega^z|^2
&=\sum_{z\in\mathbb F_q}(2-\omega^z-\omega^{-z})\\
&=2q
\end{align*}
by Lemma~\ref{lem:fq-character-orthogonality}.  Hence
\[
\sum_{s,t\in\mathbb F_q}|1-\omega^{-st\,c(u,v)}|^2
=2q(q-1),
\]
and each pair \((a,b)\) contributes \(2(q-1)/q^3\).  Summing over the
\(q^2\) choices of \((a,b)\) gives
\[
\Gamma(u,v)=\frac{2(q-1)}q.
\]
\end{itemize}
\end{proof}

\begin{corollary}[Exact averaged commutator]
\label{cor:exact-average-commutator}
For independent uniformly random \(\bu,\bv\in\mathbb F_q^2\),
\[
\mathbb E_{\bu,\bv}\Gamma(\bu,\bv)
=2\left(1-\frac1q\right)^3.
\]
\end{corollary}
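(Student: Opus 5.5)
By Theorem~\ref{thm:exact-pointwise-commutator}, \(\Gamma(u,v)\) depends only on whether \(c(u,v)\) vanishes: it equals \(2(q-1)/q\) when \(c(u,v)\neq0\) and \(0\) otherwise. Hence
\[
\mathbb E_{\bu,\bv}\Gamma(\bu,\bv)=\frac{2(q-1)}{q}\cdot\Pr_{\bu,\bv}\bigl[c(\bu,\bv)\neq0\bigr].
\]
The problem therefore reduces to computing this probability for independent uniform \(\bu,\bv\in\mathbb F_q^2\).

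Write \(\bu=(x,y)\) and \(\bv=(x',y')\). By Lemma~\ref{lem:point-label-symplectic-product},
\[
c(\bu,\bv)=(x'-x)(y'-y)^2.
\]
Since \(\mathbb F_q\) is a field, this product is nonzero exactly when \(x'\neq x\) and \(y'\neq y\). The differences \(x'-x\) and \(y'-y\) are independent and each uniform on \(\mathbb F_q\). Each is therefore nonzero with probability \(1-1/q\), so
\[
\Pr\bigl[c(\bu,\bv)\neq0\bigr]=\left(1-\frac1q\right)^2.
\]

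Multiplying gives
\[
\mathbb E_{\bu,\bv}\Gamma(\bu,\bv)=2\left(1-\frac1q\right)\left(1-\frac1q\right)^2=2\left(1-\frac1q\right)^3.
\]
There is no real obstacle here. The only point needing care is the independence and uniformity of the coordinate differences, which holds because \(x'\) is uniform and independent of \(x\), and likewise for \(y'\) and \(y\).
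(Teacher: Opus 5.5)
Your proposal is correct and follows essentially the same route as the paper: you apply the pointwise formula for \(\Gamma\), use Lemma~\ref{lem:point-label-symplectic-product} to see that \(c(\bu,\bv)\neq0\) exactly when both coordinates differ, and multiply \(2(q-1)/q\) by the probability \((1-1/q)^2\). Your explicit remark on the independence and uniformity of the coordinate differences fills in a detail the paper states without comment.
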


\begin{proof}
If \(u=(x,y)\) and \(v=(x',y')\), then
Lemma~\ref{lem:point-label-symplectic-product} gives
\[
\langle w(u),w(v)\rangle=(x'-x)(y'-y)^2.
\]
Thus \(\langle w(u),w(v)\rangle\neq0\) exactly when \(x\neq x'\) and
\(y\neq y'\), which occurs
with probability \((1-1/q)^2\) for independent uniform \(u,v\).  Therefore
\[
\mathbb E_{\bu,\bv}\Gamma(\bu,\bv)
=\frac{2(q-1)}q\left(1-\frac1q\right)^2
=2\left(1-\frac1q\right)^3.
\]
\end{proof}

\subsection{Constant separation}

In this subsection, we prove the final step, and get the main result of this paper.

\begin{theorem}[Commutator obstruction to global consistency]
\label{thm:global-measurement-obstruction}
For every projective measurement \(G=\{G_g\}_g\) on \(\mathcal H\), whose
outcomes are functions \(g:\mathbb F_q^2\to\mathbb F_q\),
\[
\mathbb E_{\bu\sim\mathbb F_q^2}
\sum_{a\in\mathbb F_q}\sum_{g:g(\bu)=a}
\bra\Phi A_a^{\bu}\otimes G_g\ket\Phi
\leq
1-\frac1{32}\mathbb E_{\bu,\bv}\Gamma(\bu,\bv).
\]
In particular, this holds for every measurement whose outcomes are
low-individual-degree polynomials.
\end{theorem}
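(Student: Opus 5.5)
The plan is to pass from the global measurement \(G\) to commuting point measurements on Bob's side, and then use the standard ``swap across the entangled state'' argument. For each \(u\in\mathbb F_q^2\) and \(a\in\mathbb F_q\), define the coarse-grained projections
\[
G^u_a:=\sum_{g:\,g(u)=a}G_g .
\]
Since they are sums of projections from the single projective measurement \(G\), each \(\{G^u_a\}_a\) is a projective measurement. Moreover, \(G^u_a\) and \(G^v_b\) commute for all \(u,v,a,b\), and in fact \(G^u_aG^v_b=\sum_{g:\,g(u)=a,\,g(v)=b}G_g\). Define the pointwise inconsistency
\[
\delta_u:=1-\sum_a\bra\Phi A^u_a\otimes G^u_a\ket\Phi .
\]
The left-hand side of the theorem is then \(1-\mathbb E_{\bu}\delta_{\bu}\), so it suffices to prove \(\mathbb E_{\bu,\bv}\Gamma(\bu,\bv)\le 32\,\mathbb E_{\bu}\delta_{\bu}\).

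First, I would convert \(\delta_u\) into a vector distance. Expanding the square and using that both families are projective and sum to \(I\) gives
\[
\sum_a\bigl\|(A^u_a\otimes I-I\otimes G^u_a)\ket\Phi\bigr\|^2 = 2-2\sum_a\bra\Phi A^u_a\otimes G^u_a\ket\Phi = 2\delta_u .
\]
Next, fix \(u,v\) and \(a,b\). I would write \((A^u_aA^v_b\otimes I)\ket\Phi\) as \((I\otimes G^v_bG^u_a)\ket\Phi\) plus two error vectors. The first is \((A^u_a\otimes I)(A^v_b\otimes I-I\otimes G^v_b)\ket\Phi\). The second is \((I\otimes G^v_b)(A^u_a\otimes I-I\otimes G^u_a)\ket\Phi\), using that operators on different tensor factors commute. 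Symmetrically, \((A^v_bA^u_a\otimes I)\ket\Phi\) equals \((I\otimes G^u_aG^v_b)\ket\Phi\) plus two analogous error vectors with the roles of \((u,a)\) and \((v,b)\) exchanged. The main terms cancel because the \(G\)'s commute, so \(([A^u_a,A^v_b]\otimes I)\ket\Phi\) is a signed sum of four error vectors. The inequality \(\|\sum_{i=1}^4 x_i\|^2\le 4\sum_i\|x_i\|^2\) then bounds its squared norm.

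The key step is summing each error term over \(a,b\) without losing factors of \(q\). For the first error vector, \(\sum_a\|(A^u_a\otimes I)\xi\|^2=\|\xi\|^2\) by projectivity and completeness of \(A^u\). Hence
\[
\sum_{a,b}\bigl\|(A^u_a\otimes I)(A^v_b\otimes I-I\otimes G^v_b)\ket\Phi\bigr\|^2 = 2\delta_v .
\]
Likewise, \(\sum_b\|(I\otimes G^v_b)\eta\|^2=\|\eta\|^2\) gives \(2\delta_u\) for the second error vector. The swapped pair contributes \(2\delta_u+2\delta_v\) in the same way. Altogether,
\[
\Gamma(u,v)\le 4\,(4\delta_u+4\delta_v)=16(\delta_u+\delta_v).
\]
Averaging over independent uniform \(\bu,\bv\) gives \(\mathbb E\,\Gamma\le 32\,\mathbb E_{\bu}\delta_{\bu}\), which is the claim.

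I expect the main point of care to be this bookkeeping. One must apply the projector sums in the right order so that each error collapses to exactly \(2\delta\), with no \(q\)-dependent blowup. One must also check that the transpose convention for Bob plays no role: here \(G\) acts directly on Bob's space, and only the identity \(\bra\Phi M\otimes N\ket\Phi\) with projective \(M,N\) is used. The final sentence of the theorem is immediate, since low-individual-degree polynomials are a special case of functions \(g:\mathbb F_q^2\to\mathbb F_q\).
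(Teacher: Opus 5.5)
Your argument is correct and gives exactly the paper's constant: pointwise $\Gamma(u,v)\le 16(\delta_u+\delta_v)$, hence the factor $32$ after averaging. The skeleton matches the paper's:
\begin{itemize}
\item coarse-grain $G$ into commuting marginals $G^u_a$;
\item identify the consistency defect with a squared distance equal to $2\delta_u$;
\item split the commutator into four error terms whose main parts cancel because the marginals commute;
\item collapse each error term over $(a,b)$ by completeness.
\end{itemize}
Where you differ is in how this is executed.

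The paper transposes the marginals onto Alice's space and sets $\Delta^u_a=A^u_a-(G^u_a)^{\mathrm T}$. It then uses the operator identity
\[
[A^u_a,A^v_b]=\Delta^u_aA^v_b+(G^u_a)^{\mathrm T}\Delta^v_b-\Delta^v_bA^u_a-(G^v_b)^{\mathrm T}\Delta^u_a .
\]
In the first and third terms the projection sits to the right of $\Delta$. Removing it requires $\|(M\otimes I)\ket\Phi\|^2=\operatorname{Tr}(M^\dagger M)/\dim\mathcal H$ together with cyclicity of the trace. So the paper's argument genuinely uses that $\ket\Phi$ is maximally entangled.

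You instead keep $G$ on Bob's factor and work with vectors. You trade the inner $A$ for its $G$-counterpart on Bob's side, move it past the outer $A$ because the two act on different factors, and then trade the outer $A$ as well. Every projection you sum over therefore acts on the left of a fixed vector, so you only use $\sum_aA^u_a=I$ and $\sum_bG^v_b=I$. You need no transposes and no traces, and your proof holds verbatim for an arbitrary shared state. The paper's route, in exchange, is an exact identity between operators on a single space, which fits its trace-based computation of $\Gamma$.
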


\begin{proof}
Set
\[
G_a^u:=\sum_{g:g(u)=a}G_g.
\]
For every \(u\), \(\{G_a^u\}_a\) is a projective measurement.  Moreover,
all the \(G_a^u\)'s commute, since they are coarse-grainings of the single
projective measurement \(G\).  Consequently, their transposes are also
projective and commute with one another.

Write
\[
\Delta_a^u:=A_a^u-(G_a^u)^{\mathrm T}.
\]
Lemma~\ref{lem:maximally-entangled-identity} implies that, for every operator
\(M\) on \(\mathcal H\),
\[
\bigl\|(M\otimes I)\ket\Phi\bigr\|^2
=\frac1{\dim \mathcal{H}}\operatorname{Tr}(M^\dagger M).
\]
Using this identity and projectivity, we obtain
\begin{align*}
\sum_a\bigl\|(\Delta_a^u\otimes I)\ket\Phi\bigr\|^2
&=\frac1{\dim \mathcal{H}}\sum_a
  \operatorname{Tr}\!\left(\bigl(A_a^u-(G_a^u)^{\mathrm T}\bigr)^2\right)\\
&=\frac1{\dim \mathcal{H}}\sum_a\Bigl(
  \operatorname{Tr}\!\left((A_a^u)^2\right)
  +\operatorname{Tr}\!\left(((G_a^u)^{\mathrm T})^2\right)
  -\operatorname{Tr}\!\left(A_a^u(G_a^u)^{\mathrm T}\right)
  -\operatorname{Tr}\!\left((G_a^u)^{\mathrm T}A_a^u\right)
  \Bigr)\\
&=\frac1{\dim \mathcal{H}}\sum_a\Bigl(
  \operatorname{Tr}(A_a^u)
  +\operatorname{Tr}((G_a^u)^{\mathrm T})
  -2\operatorname{Tr}\!\left(A_a^u(G_a^u)^{\mathrm T}\right)
  \Bigr)\\
&=\frac1{\dim \mathcal{H}}\left(
  \operatorname{Tr}(I)+\operatorname{Tr}(I)
  -2\sum_a\operatorname{Tr}\!\left(A_a^u(G_a^u)^{\mathrm T}\right)
  \right)\\
&=2-\frac{2}{\dim \mathcal{H}}\sum_a
  \operatorname{Tr}\!\left(A_a^u(G_a^u)^{\mathrm T}\right)\\
&=2\left(1-\sum_a\bra\Phi A_a^u\otimes G_a^u\ket\Phi\right).
\end{align*}
Consequently,
\[
\mathbb E_u\sum_a
\bigl\|(\Delta_a^u\otimes I)\ket\Phi\bigr\|^2
=2\left(1-\mathbb E_u\sum_a
\bra\Phi A_a^u\otimes G_a^u\ket\Phi\right).
\]

Since the transposed global marginals commute,
\begin{align*}
[A_a^u,A_b^v]
&=\left(\Delta_a^u+(G_a^u)^{\mathrm T}\right)
  \left(\Delta_b^v+(G_b^v)^{\mathrm T}\right)
  -\left(\Delta_b^v+(G_b^v)^{\mathrm T}\right)
  \left(\Delta_a^u+(G_a^u)^{\mathrm T}\right)\\
&=\Delta_a^u\Delta_b^v
  +\Delta_a^u(G_b^v)^{\mathrm T}
  +(G_a^u)^{\mathrm T}\Delta_b^v
  -\Delta_b^v\Delta_a^u
  -\Delta_b^v(G_a^u)^{\mathrm T}
  -(G_b^v)^{\mathrm T}\Delta_a^u\\
&=\Delta_a^uA_b^v+(G_a^u)^{\mathrm T}\Delta_b^v
  -\Delta_b^vA_a^u-(G_b^v)^{\mathrm T}\Delta_a^u.
\end{align*}
The second equality uses the commutativity of the transposed global
marginals, namely,
\[
(G_a^u)^{\mathrm T}(G_b^v)^{\mathrm T}
=(G_b^v)^{\mathrm T}(G_a^u)^{\mathrm T}
\]

For any projective measurement \(R=\{R_j\}_j\), the preceding trace formula
and \(\sum_jR_j=I\) give
\[
\sum_j\bigl\|((XR_j)\otimes I)\ket\Phi\bigr\|^2
=\sum_j\bigl\|((R_jX)\otimes I)\ket\Phi\bigr\|^2
=\bigl\|(X\otimes I)\ket\Phi\bigr\|^2.
\]
For brevity, set
\[
D_u:=\sum_a\bigl\|(\Delta_a^u\otimes I)\ket\Phi\bigr\|^2,
\qquad
D_v:=\sum_b\bigl\|(\Delta_b^v\otimes I)\ket\Phi\bigr\|^2.
\]
We apply the preceding identity separately to each of the four terms in the
commutator expansion.  For the first and fourth terms, summing over \(b\)
uses the projective measurements \(A^v=\{A_b^v\}_b\) and
\(\{(G_b^v)^{\mathrm T}\}_b\), respectively.  Thus
\begin{align*}
\sum_{a,b}\bigl\|((\Delta_a^uA_b^v)\otimes I)\ket\Phi\bigr\|^2
&=\sum_a\bigl\|(\Delta_a^u\otimes I)\ket\Phi\bigr\|^2
=D_u,\\
\sum_{a,b}\bigl\|(((G_b^v)^{\mathrm T}\Delta_a^u)\otimes I)
\ket\Phi\bigr\|^2
&=\sum_a\bigl\|(\Delta_a^u\otimes I)\ket\Phi\bigr\|^2
=D_u.
\end{align*}
Similarly, for the second and third terms, summing over \(a\) uses the
projective measurements \(\{(G_a^u)^{\mathrm T}\}_a\) and
\(A^u=\{A_a^u\}_a\), respectively.  Hence
\begin{align*}
\sum_{a,b}\bigl\|(((G_a^u)^{\mathrm T}\Delta_b^v)\otimes I)
\ket\Phi\bigr\|^2
&=\sum_b\bigl\|(\Delta_b^v\otimes I)\ket\Phi\bigr\|^2
=D_v,\\
\sum_{a,b}\bigl\|((\Delta_b^vA_a^u)\otimes I)\ket\Phi\bigr\|^2
&=\sum_b\bigl\|(\Delta_b^v\otimes I)\ket\Phi\bigr\|^2
=D_v.
\end{align*}
Now regard each family of vectors indexed by \((a,b)\) as one vector in the
direct-sum Hilbert space
\(\bigoplus_{a,b}(\mathcal H\otimes\mathcal H)\).  The triangle inequality,
applied to the four families in the commutator expansion, gives
\begin{align*}
&\sqrt{
\sum_{a,b}\bigl\|([A_a^u,A_b^v]\otimes I)\ket\Phi\bigr\|^2
}\\
&\quad\leq
\sqrt{D_u}+\sqrt{D_v}+\sqrt{D_v}+\sqrt{D_u}\\
&\quad=2\left(\sqrt{D_u}+\sqrt{D_v}\right)\\
&\quad=2\left(
\sqrt{\sum_a\bigl\|(\Delta_a^u\otimes I)\ket\Phi\bigr\|^2}
+
\sqrt{\sum_b\bigl\|(\Delta_b^v\otimes I)\ket\Phi\bigr\|^2}
\right).
\end{align*}
Squaring, using \((\sqrt{x}+\sqrt{y})^2\leq2(x+y)\), and averaging over
independent uniform \(u,v\) gives
\[
\mathbb E_{u,v}\sum_{a,b}
\bigl\|([A_a^u,A_b^v]\otimes I)\ket\Phi\bigr\|^2
\leq
16\mathbb E_u\sum_a
\bigl\|(\Delta_a^u\otimes I)\ket\Phi\bigr\|^2
=32\left(1-\mathbb E_u\sum_a
\bra\Phi A_a^u\otimes G_a^u\ket\Phi\right).
\]
By the definition of \(\Gamma(u,v)\), rearranging gives
\[
\mathbb E_u\sum_a\bra\Phi A_a^u\otimes G_a^u\ket\Phi
\leq 1-\frac1{32}\mathbb E_{u,v}\Gamma(u,v),
\]
which is the claimed inequality.  Low-individual-degree polynomials are a
special class of functions \(\mathbb F_q^2\to\mathbb F_q\), which proves the
final statement.
\end{proof}

	\begin{corollary}[Explicit constant separation]
	For every projective measurement \(G=\{G_g\}_g\) as in
	Theorem~\ref{thm:global-measurement-obstruction},
	\[
	\mathbb E_{\bu\sim\mathbb F_q^2}
	\sum_{a\in\mathbb F_q}\sum_{g:g(\bu)=a}
	\bra\Phi A_a^{\bu}\otimes G_g\ket\Phi
	\leq
	1-\frac1{16}\left(1-\frac1q\right)^3
	\leq 1-\frac1{54}.
	\]
	\end{corollary}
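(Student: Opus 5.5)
The plan is to combine Theorem~\ref{thm:global-measurement-obstruction} with Corollary~\ref{cor:exact-average-commutator}, and then bound the resulting expression uniformly in \(q\).

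First, fix any projective measurement \(G=\{G_g\}_g\) on \(\mathcal H\) whose outcomes are functions \(g:\mathbb F_q^2\to\mathbb F_q\). Theorem~\ref{thm:global-measurement-obstruction} bounds the averaged point consistency by
\[
1-\frac1{32}\mathbb E_{\bu,\bv}\Gamma(\bu,\bv).
\]
Corollary~\ref{cor:exact-average-commutator} gives the exact value \(\mathbb E_{\bu,\bv}\Gamma(\bu,\bv)=2(1-1/q)^3\). Substituting, the consistency is at most
\[
1-\frac{2}{32}\Bigl(1-\frac1q\Bigr)^3=1-\frac1{16}\Bigl(1-\frac1q\Bigr)^3,
\]
which is the first inequality.

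For the second inequality, note that \(q\) is an odd prime, so \(q\ge3\). The map \(q\mapsto(1-1/q)^3\) is increasing, hence \((1-1/q)^3\ge(2/3)^3=8/27\). Therefore
\[
\frac1{16}\Bigl(1-\frac1q\Bigr)^3\ge\frac{8}{27\cdot16}=\frac1{54},
\]
and so \(1-\frac1{16}(1-1/q)^3\le1-\frac1{54}\).

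There is essentially no obstacle here: all the analytic work lies in the two earlier results. The only point needing care is that the state in Theorem~\ref{thm:global-measurement-obstruction} is \(\ket\Phi=\ket\psi\), and that \(A^{\bu}\) denotes Alice's point measurement, so the statement matches the main theorem exactly.
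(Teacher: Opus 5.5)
Your proposal is correct and follows the paper's proof exactly. You substitute $\mathbb E_{\bu,\bv}\Gamma(\bu,\bv)=2(1-1/q)^3$ from Corollary~\ref{cor:exact-average-commutator} into Theorem~\ref{thm:global-measurement-obstruction}, then use $q\ge 3$ to get $\frac1{16}(2/3)^3=\frac1{54}$.
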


	\begin{proof}
	Corollary~\ref{cor:exact-average-commutator} gives
	\[
	\frac1{32}\mathbb E_{\bu,\bv}\Gamma(\bu,\bv)
	=\frac1{16}\left(1-\frac1q\right)^3.
	\]
	Moreover, every odd prime satisfies \(q\geq3\), so
	\[
	\frac1{16}\left(1-\frac1q\right)^3
	\geq
	\frac1{16}\left(1-\frac13\right)^3
	=\frac1{54}.
	\]
	The result follows from
	Theorem~\ref{thm:global-measurement-obstruction}.
	\end{proof}

\section{Discussion}
\label{sec:discussion}
We mention that the above example can be generalized in three directions, though we do not give a rigorous treatment in this paper:
\begin{itemize}
	\item \(q\) can be chosen as any prime power: When \(q\) is odd prime power, the generalization is straightforward; when \(q = 2^k\), the generalization is more involved, but can be treated.
	\item \(m\) can be larger than 2: This is simply because we can embed the example into higher dimensional spaces.
	\item \(d\) can be larger than \(2\): This is simply because any lower degree polynomial is naturally a higher degree polynomial.
\end{itemize}

\bibliographystyle{alpha}
\bibliography{FormalLargeQ}

\end{document}